\newcommand{\norm}{\alpha}
\newcommand{\CI}{\kappa}
\newcommand{\PTF}{\mathrm{PTF}}
\newcommand{\ph}{\mathrm{ph}}
\newcommand{\cc}{\mathrm{c}}
\newcommand{\C}{\mathrm{C}}
\newcommand{\G}{\mathrm{G}}
\newcommand{\LT}{\mathrm{LT}}
\newcommand{\romL}{\mathrm{L}}
\newcommand{\eps}{{\varepsilon}}        
\newcommand{\la}{\langle} \newcommand{\ra}{\rangle}
\newcommand{\cD}{\mathcal{D}}
\newcommand{\cE}{\mathcal{E}}
\newcommand{\cF}{\mathcal{F}}
\newcommand{\cH}{\mathcal{H}}
\newcommand{\RR}{\mathbb{R}}            
\renewcommand{\i}{\mathrm{i}}
\newcommand{\Tr}{\mathop{\mathrm{Tr}}}
\renewcommand{\thesection}
{\arabic{section}}                     
\renewcommand{\theequation}
{\thesection.\arabic{equation}}        
\newcommand{\secct}[1]{\section{#1}
\setcounter{equation}{0}}              
\newtheorem{theorem}{Theorem}[section]         
\newtheorem{lemma}[theorem]{Lemma}             
\newtheorem{proposition}[theorem]{Proposition} 
\theoremstyle{plain}
\begin{document}

\bibliographystyle{amsplain}

\setcounter{page}{0} \thispagestyle{empty}
\title{Bounds on the Minimal Energy of Translation Invariant $N$-Polaron Systems}
\author{
Marcel Griesemer\\
Universit\"at Stuttgart, Fachbereich Mathematik\\
70550 Stuttgart, Germany\\
{\em email:} marcel@mathematik.uni-stuttgart.de\\
\\
Jacob Schach M{\o}ller\\
Aarhus University, Department of Mathematical Sciences\\
8000 {\AA}rhus C, Denmark\\
{\em email:} jacob@imf.au.dk }

\date{}

\maketitle
\begin{abstract}
For systems of $N$ charged fermions (e.g. electrons) interacting with
longitudinal optical quantized lattice vibrations of a polar
crystal we derive upper and lower bounds on the minimal
energy within the model of H.~Fr\"ohlich. The only parameters of
this model, after removing the
ultraviolet cutoff, are the constants $U>0$ and $\alpha>0$ measuring the electron-electron and the
electron-phonon coupling strengths. They are constrained by the
condition  $\sqrt{2}\alpha<U$, which follows from 
the dependence of $U$ and $\alpha$ on electrical properties of the
crystal. We show that the large $N$ asymptotic behavior of the minimal
energy $E_N$ changes at $\sqrt{2}\alpha=U$ and that
$\sqrt{2}\alpha\leq U$ is necessary for thermodynamic stability: for $\sqrt{2}\alpha > U$ the phonon-mediated electron-electron attraction
overcomes the Coulomb repulsion and $E_N$ behaves like $-N^{7/3}$.
\end{abstract}




\thispagestyle{empty}

\setcounter{page}{1}

\newpage
\secct{Introduction} \label{sec-intro}

We study a system of $N$ electrons in a polar (ionic) crystal, modelled
by a Hamiltonian derived by H.~Fr{\"o}hlich \cite{Froehlich1954}. 
The model takes into account the electron-electron Coulomb repulsion, and 
a linear interaction of the electrons with the longitudinal optical phonons.
The model is called the 'large polaron' model, since it assumes that a
polaron 
(dressed electron) extends over a region which is large compared to the ion-ion spacing. 
In particular the underlying discrete (and infinite) crystal is replaced by a continuum.
See \cite{Devreese1996,Feynman1972,LeeLowPines1953}.

As is well-known, linear electron-phonon couplings
induce an effective pair attraction between electrons. 
This attraction competes with the electron-electron repulsion and may cause a 
phase-transition as the electron-phonon interaction strength increases.
This mechanism is behind the production of Cooper pairs in the BCS model
of low temperature superconductivity, and in high-$T_\cc$ superconductivity
the role of many-polaron systems is being investigated \cite{AlexandrovMott1996,DevreeseTempere1998,Hartingeretal2004}.

 The Fr{\"o}hlich Hamiltonian depends on two non-negative dimensionless quantities, $U$ and $\alpha$.
The constant $U$ is the electron-electron repulsion strength, and $\alpha$ is 
the Fr{\"o}hlich electron-phonon coupling constant. 
Physically relevant models must satisfy the constraint, 
cf. \cite{BrosensKliminDevreese2005,VerbistPeetersDevreese1991},
$$
  \sqrt{2}\alpha < U.
$$

 In this paper we prove upper and lower bounds on the minimal energy $E_N$
of the $N$-electron Fr{\"o}hlich Hamiltonian for all $N$ and all non-negative 
values of $U$, $\alpha$. In the unphysical regime $\sqrt{2}\alpha\geq U$, our results imply that
$E_N \sim -N^{7/3}$. In the physical regime we find that $E_N \geq -C N^2$, thus establishing a 
sharp transition in the large $N$-asymptotics of $E_N$ at $\sqrt{2}\alpha = U$. This transition 
is due to the mediated attraction between electrons overcoming the repulsion at $\sqrt{2}\alpha=U$ 
in the limit of large $N$. In fact, the quantity $U-\sqrt{2}\alpha$ appears in our analysis 
as an effective Coulomb coupling strength. 
We also demonstrate that $E_N\leq -\alpha N$ and $E_{N+M}\leq E_N+E_M$
in the physical regime. We do not  
know whether or not $E_N$ is an extensive quantity, but if it is not
extensive, then this must be due to electron-phonon correlations, 
cf. Proposition~\ref{mainA}.

We pause this discussion to introduce the mathematical model.
 The Fr{\"o}hlich Hamiltonian describing $N$ electrons in a polar crystal reads
\begin{equation}\label{HN}
   \sum_{\ell=1}^N\big[-\tfrac12\Delta_{x_\ell}+\sqrt{\alpha}\Phi(x_\ell)\big] + H_\ph 
 +U V_\C,
\end{equation}
where the number operator 
$$
   H_\ph = \int_{\RR^3}a^*(k)a(k) dk,
$$
accounts for the kinetic energy of the phonons while 
the field operator
$$
  \Phi(x) = \int_{\RR^3}\frac1{c_0|k|}\big[e^{\i k\cdot x}a(k) + e^{-\i k\cdot x}a^*(k)\big]dk,
$$
is responsible for the electron-phonon interaction. Here $c_0 := 2^{3/4}\pi$.
Finally the electron-electron interaction is given by the sum of
two-body Coulomb potentials
\[
V_\C(x_1,\ldots,x_N) =  \sum_{1\leq i< j \leq N} \frac{1}{|x_i-x_j|}.
\]
 We work in units where the frequency of the longitudinal
optical phonons, $\omega_{\mathrm{LO}}$, Planck's constant
$\hbar$, and the electron band mass are equal to one.

Let $\cF$ denote the symmetric Fock space over $L^2(\RR^3)$. The Hamiltonian \eqref{HN} defines a symmetric quadratic 
form on $\cH = \wedge^N L^2(\RR^3)\otimes\cF$, but, a priori, it is not well defined as a self-adjoint operator. For that one
must first impose an ultraviolet cutoff on the electron-phonon interaction:
Let $\Lambda>0$, and define the cutoff Hamiltonian as
$$
H_{N,\Lambda} = \sum_{\ell=1}^N-\tfrac12\Delta_{x_\ell} + H_\ph +
\sqrt{\alpha}\sum_{\ell=1}^N \Phi_\Lambda(x_\ell) + U V_\C,
$$
where
$$
\Phi_\Lambda(x) = \int_{|k|\leq \Lambda}\frac1{c_0|k|}
\big[e^{\i k\cdot x}a(k) +e^{-\i k\cdot x}a^*(k)\big]dk.
$$
The operators $H_{N,\Lambda}$ are self-adjoint on $\cD(H_\ph)\cap
\cD(\sum_{\ell=1}^N\Delta_{x_\ell})$, by the Kato-Rellich theorem, and it is well known, cf.
\cite{Ammari2000,Cannon1971,Froehlich1974,GerlachLowen1991,Nelson1964}, that
$H_{N,\Lambda}$ converges, as $\Lambda\to\infty$, in the norm-resolvent
sense to a self-adjoint operator, which we denote by $H_N$. This implies that
\begin{equation}\label{removecut}
E_N = \lim_{\Lambda\to\infty} E_{N,\Lambda}
\end{equation}
if $E_{N,\Lambda} := \inf\sigma(H_{N,\Lambda})$ and $E_N := \inf\sigma(H_N)$.

The main goal of this paper is to investigate the large $N$ behavior
of the minimal energy $E_N$ as a function of $\alpha$ and $U$.
Our first result is an upper bound in the regime $\sqrt{2}\alpha>U$.

\begin{theorem}\label{main1} There is a constant $C$ such that for all $N$
  and for $\sqrt{2}\alpha \geq U \geq 0$
\begin{equation*}
E_N \ \leq\  (\sqrt{2}\alpha-U)^2 N^\frac73\big[E_{\PTF} + C N^{-\frac1{17}}\big].
\end{equation*}
Here $E_{\PTF}<0$ is given by \eqref{eq:PTFenergy} below.
\end{theorem}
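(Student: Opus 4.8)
\medskip
\noindent\emph{Proof proposal.} The bound is an upper bound on a ground-state energy, so the natural strategy is variational: I would exhibit a trial vector $\Psi\in\cH$ and estimate $\langle\Psi,H_N\Psi\rangle/\|\Psi\|^2$ from above. Since the energy scale $N^{7/3}$ and the concentration length $N^{-1/3}$ are those of Thomas--Fermi theory, while the field enters linearly, the right ansatz is a Pekar product $\Psi=\Psi_{\mathrm{el}}\otimes\exp(a^*(\varphi)-a(\varphi))\Omega$, with $\Omega$ the Fock vacuum, $\Psi_{\mathrm{el}}\in\wedge^N L^2(\RR^3)$ fermionic with one-particle density $\rho$, and $\varphi$ a coherent field to be optimized. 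Because $\sqrt\alpha\sum_\ell\Phi(x_\ell)$ is linear in the creation/annihilation operators, the phonon expectation only sees the one-particle density, and completing the square in $\varphi$ gives
\begin{equation*}
\langle\Psi,H_N\Psi\rangle \;=\; \langle\Psi_{\mathrm{el}},(T+UV_\C)\Psi_{\mathrm{el}}\rangle \;-\;\frac{\alpha}{c_0^2}\int_{\RR^3}\frac{|\hat\rho(k)|^2}{|k|^2}\,dk ,
\end{equation*}
the minimizing field being $\varphi(k)=-\sqrt\alpha\,\hat\rho(k)/(c_0|k|)$. Here the choice $c_0=2^{3/4}\pi$ is exactly what turns the induced attraction into $-\tfrac{\sqrt2\alpha}{2}\,D(\rho,\rho)$, with $D(\rho,\rho)=\iint\rho(x)\rho(y)|x-y|^{-1}\,dx\,dy$; together with the repulsion this produces a two-body part governed by the net coupling $\sqrt2\alpha-U$, which is where the prefactor $(\sqrt2\alpha-U)^2$ will originate. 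Since $\hat\rho$ is smooth and decaying, $\varphi\in L^2$, so the same computation with the cutoff $\Phi_\Lambda$ converges as $\Lambda\to\infty$ and the bound is legitimate for the renormalized $H_N$ via $E_N=\lim_\Lambda E_{N,\Lambda}$.

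Next I would pass to the Pekar--Thomas--Fermi functional
\begin{equation*}
\cE(\rho)\;=\;c_{\mathrm{TF}}\int_{\RR^3}\rho^{5/3}\,dx\;-\;\frac{\sqrt2\alpha-U}{2}\,D(\rho,\rho),\qquad \int\rho=N,
\end{equation*}
whose minimizer is obtained from the unit-mass, unit-coupling profile $\rho_*$ defining $E_{\PTF}$ by the scaling $\rho(x)=(\sqrt2\alpha-U)^3N^2\,\rho_*\big((\sqrt2\alpha-U)N^{1/3}x\big)$ up to a fixed constant; under it both terms scale like $(\sqrt2\alpha-U)^2N^{7/3}$ and $\min\cE=(\sqrt2\alpha-U)^2N^{7/3}E_{\PTF}$ with $E_{\PTF}<0$. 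To realize such a $\rho$ by an antisymmetric state of nearly Thomas--Fermi kinetic energy I would use the local-Fermi-gas construction: partition $\RR^3$ into cubes of side $\ell$ and, in each cube, form a Slater determinant of plane waves filling the momentum ball of radius $(6\pi^2\rho)^{1/3}$ dictated by the local density. Orbitals in distinct cubes are orthogonal, so the full antisymmetric product is a single Slater determinant whose density $\rho_\Psi$ approximates $\rho$ and whose kinetic energy equals $c_{\mathrm{TF}}\int\rho_\Psi^{5/3}$ up to a localization cost of order $N\ell^{-2}$.

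Finally I would assemble the estimate. Choosing $\varphi$ adapted to the actual density $\rho_\Psi$ and using that for a Slater determinant $\langle\Psi_{\mathrm{el}},V_\C\Psi_{\mathrm{el}}\rangle=\tfrac12 D(\rho_\Psi,\rho_\Psi)-\tfrac12\iint|\gamma(x,y)|^2|x-y|^{-1}\,dx\,dy\le\tfrac12 D(\rho_\Psi,\rho_\Psi)$, where $\gamma$ is the one-particle density matrix and the exchange term only helps, I obtain $E_N\le T[\Psi_{\mathrm{el}}]-\tfrac{\sqrt2\alpha-U}{2}D(\rho_\Psi,\rho_\Psi)$. The remaining errors are then purely of kinetic and local-density type: the localization cost $\sim N\ell^{-2}$, the gap between the summed free-gas energies and $c_{\mathrm{TF}}\int\rho_\Psi^{5/3}$, and the excess $\cE(\rho_\Psi)-\min\cE$ from $\rho_\Psi$ deviating from the scaled minimizer at scale $\ell$. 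Each is a power of $\ell$ times powers of $N$ and of $\sqrt2\alpha-U$; since the minimizer lives on scale $(\sqrt2\alpha-U)^{-1}N^{-1/3}$ one needs $N^{-2/3}\ll\ell\ll(\sqrt2\alpha-U)^{-1}N^{-1/3}$, and balancing the competing powers yields the relative error $O(N^{-1/17})$. The main obstacle is precisely this optimization: one must show that every correction to the Thomas--Fermi energy is genuinely of lower order than $N^{7/3}$, uniformly in the high-density short-scale regime, and—since $\sqrt2\alpha-U$ may be small—the estimates must be carried with their explicit dependence on $\sqrt2\alpha-U$ so that the stated $(\sqrt2\alpha-U)^2$ prefactor is respected.
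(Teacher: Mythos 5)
Your first half coincides with the paper's: the same Pekar product with a coherent phonon state, the same completion of the square giving the effective attraction $-\tfrac{\sqrt2\alpha}{2}\iint\rho(x)\rho(y)|x-y|^{-1}dxdy$ (Lemma~\ref{squares} in the paper), the same observation that the exchange term comes with a favorable sign and can be dropped, and the same removal of the cutoff via $E_N=\lim_\Lambda E_{N,\Lambda}$. Where you genuinely diverge is the fermionic construction. The paper never builds an explicit $N$-body wave function: it invokes Lieb's variational principle, which bounds the ground state energy of the effective $N$-body Hamiltonian by the Hartree--Fock functional evaluated at an \emph{arbitrary} one-particle density matrix $0\le\gamma\le1$, $\Tr\gamma=N$, and then uses Lieb's coherent-state phase-space construction (Lemma~\ref{lm:costate}): $\gamma=(2\pi)^{-3}\int M(p,q)\,\Pi_{pq}\,dp\,dq$ with $g_{pq}(x)=e^{\i px}g_\eps(x-q)$, whose density is $\rho*|g_\eps|^2$ and whose kinetic energy is \emph{exactly} $\tfrac35(6\pi^2)^{2/3}\int\rho^{5/3}+N\|\nabla g_\eps\|^2$. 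This replaces your entire box-localization error budget by a single Gaussian smearing error, which is controlled in Fourier space via $\sup_{k\neq0}(1-|f(k)|^2)/|k|^{\mu-1}\le1$ for $1<\mu<6/5$ together with the uniform $\|\rho_n\|_{5/3}$ bound \eqref{a1} along a minimizing sequence (so no existence of a PTF minimizer is needed); optimizing over $\eps$ and taking $\mu=37/31$ produces the exponent $1/17$. The $\beta$-dependence is handled once and for all by the dilation $\gamma=U_\beta\tilde\gamma U_\beta^*$ in Proposition~\ref{EN-bound}, which makes the whole bound homogeneous of degree two in $\beta=\sqrt2\alpha-U$.

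Your local-Fermi-gas route can be made to work, but as written it has three loose ends. First, sharply truncated plane waves $e^{\i p\cdot x}\chi_Q(x)$ are not in $H^1(\RR^3)$, so you must use Dirichlet (or smoothly localized) orbitals and track the resulting boundary corrections, with their explicit dependence on $\beta$ since the relevant length scale is $\beta^{-1}N^{-1/3}$. Second, the exponent $1/17$ is asserted rather than derived from your error terms ($N\ell^{-2}$ plus the density-approximation error in the Coulomb term at scale $\ell$); to prove the theorem as stated you must actually exhibit an error $O(N^{-s})$ with $s\ge 1/17$, which requires carrying out the balancing you only sketch. Third, at $\beta=0$ your scaled trial density degenerates and your bound reads $E_N\le T[\Psi_{\mathrm{el}}]>0$, whereas the claim is $E_N\le0$; this case needs the separate (trivial) argument the paper gives, e.g.\ taking $\eta$ the vacuum and letting the electronic state spread out.
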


Theorem~\ref{main1} is proved variationally by using Pekar's ansatz in
terms of a product state, which is known to give the correct ground state energy for $N=1,2$ in the large $\alpha$ limit
\cite{DonskerVaradhan1983,LiebThomas1997,MiyaoSpohn2007}. Taking
the expectation value in a state $f\otimes\eta\in \wedge^N
L^2(\RR^3)\otimes\cF$ and explicitly minimizing with
respect to $\eta$ we arrive at a 
Hartree-Fock type energy which is then estimated by a Thomas-Fermi energy.
This allows us to scale out all parameters and we are left with the bound in 
Theorem~\ref{main1}, where 
\begin{equation}\label{eq:PTFenergy}
E_{\PTF} = \inf_{\rho\geq 0, \int\rho(x)dx=1} \cE_{\PTF}(\rho),
\end{equation}
\begin{equation}\label{eq:PTF}
   \cE_\PTF(\rho):= \tfrac{3}{10}(6\pi^2)^{\frac23}
   \int_{\RR^3}\rho(x)^{\frac53}\,dx - 
   \tfrac12\int_{\RR^6}\frac{\rho(x)\rho(y)}{|x-y|}\,dx
   dy.
\end{equation}
We note that in the error term in Theorem~\ref{main1} the exponent
$1/17$ can be replaced by any number less than $2/33$ at the expense of a larger and divergent constant $C$.

To show that the variational upper bound from
Theorem~\ref{main1} has the right asymptotics in $N$ and $\alpha$, we
provide the following lower bound: 

\begin{theorem}\label{main2} There exists $C >0$ such that
for all $N$ and $\sqrt{2}\alpha \geq U \geq 0$,
\begin{equation}\label{Main1lower}
E_N\geq -C_\G(\sqrt{2}\alpha - U)^2N^{\frac73} - C \alpha^2 N^{\frac73-\frac19}.
\end{equation}
\end{theorem}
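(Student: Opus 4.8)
The plan is to eliminate the phonon field in favour of an effective electronic interaction and then to bound the resulting many-fermion energy below by a Thomas--Fermi energy. Writing $\lambda := \sqrt{2}\alpha - U \geq 0$ for the effective coupling, the guiding picture is that, after integrating out the phonons, the electron--phonon coupling produces an attractive pair interaction of strength $\sqrt{2}\alpha$ which, combined with the repulsion $U V_\C$, leaves a net attraction of strength $\lambda$. For $N$ fermions with kinetic energy and a net attractive Coulomb interaction the scaling $\rho(x) = (N/\ell^3)\sigma(x/\ell)$ with $\ell \sim \lambda^{-1}N^{-1/3}$ forces the energy to scale as $-\lambda^2 N^{7/3}$, which is exactly the order claimed.

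First I would carry out the phonon elimination with an auxiliary ultraviolet cutoff $K$ (to be optimized), splitting $\Phi = \Phi_{\leq K} + \Phi_{>K}$. On the low modes I would complete the square in the phonon field: for fixed electron positions the operator $H_\ph^{\leq K} + \sqrt{\alpha}\sum_\ell \Phi_{\leq K}(x_\ell)$ is bounded below by $-\tfrac{\alpha}{c_0^2}\int_{|k|\leq K}|k|^{-2}\big|\sum_\ell e^{-\i k\cdot x_\ell}\big|^2\,dk$. Separating the diagonal $\ell=m$ (a finite self-energy $-c\alpha K N$) from the off-diagonal terms yields a bounded, regularized pair attraction $-\tfrac{2\alpha}{c_0^2}\sum_{\ell<m} w_K(x_\ell-x_m)$ with $0\leq w_K(r)\leq 2\pi^2/|r|$ and $w_K(0)=4\pi K$. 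On the high modes I would use a Lieb--Yamazaki commutator bound: integrating by parts in $x_\ell$ trades the factor $|k|^{-1}$ for $|k|^{-2}$, so that $\int_{|k|>K}|k|^{-4}\,dk\sim K^{-1}$ is finite and the high modes are dominated by the electron kinetic energy, $\sqrt{\alpha}\sum_\ell \Phi_{>K}(x_\ell)\geq -\delta\sum_\ell(-\tfrac12\Delta_{x_\ell}) - c\alpha N/(\delta K)$. Since $w_K(r)\leq 2\pi^2/|r|$ and $\tfrac{2\alpha}{c_0^2}\cdot 2\pi^2=\sqrt{2}\alpha$, adding $U V_\C$ gives the operator bound
\[
H_N \geq (1-\delta)\sum_{\ell=1}^N\big(-\tfrac12\Delta_{x_\ell}\big) - \lambda V_\C - c\alpha K N - \frac{c\alpha N}{\delta K}.
\]

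It remains to bound the self-attracting fermion operator $(1-\delta)\sum_\ell(-\tfrac12\Delta_{x_\ell}) - \lambda V_\C$ from below. Here I would invoke the Lieb--Thirring inequality, $\langle\sum_\ell(-\tfrac12\Delta_{x_\ell})\rangle\geq c_{\LT}\int\rho_\Psi^{5/3}\,dx$, together with an estimate converting the two-body attraction into the mean-field term $-\lambda\cdot\tfrac12\int_{\RR^6}|x-y|^{-1}\rho_\Psi(x)\rho_\Psi(y)\,dx\,dy$ up to an exchange-correlation remainder controlled by $\int\rho_\Psi^{4/3}\,dx$ and by a small fraction of the kinetic energy. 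This reduces the problem to minimizing the Pekar--Thomas--Fermi functional $c_{\LT}\int\rho^{5/3}\,dx - \lambda\cdot\tfrac12\int_{\RR^6}|x-y|^{-1}\rho(x)\rho(y)\,dx\,dy$ subject to $\int\rho\,dx=N$; the scaling computation sketched above bounds its infimum below by $-C\lambda^2 N^{7/3}$, which matches $\cE_\PTF$ of \eqref{eq:PTF} after rescaling and supplies the leading term $-C_\G\lambda^2 N^{7/3}$.

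Finally I would collect the errors $c\alpha K N$, $c\alpha N/(\delta K)$, the kinetic loss $\sim\lambda^2\delta N^{7/3}$ coming from the factor $(1-\delta)$, and the exchange-correlation remainder, and optimize over $K$ and $\delta$; balancing these against the relevant length scale $\ell\sim\lambda^{-1}N^{-1/3}$ produces the subleading error $-C\alpha^2 N^{7/3-1/9}$. The main obstacle is precisely this last reduction: because the phonon-induced attraction is genuinely a two-body (self-)attraction rather than an external one-body potential, the Lieb--Oxford/Onsager electrostatic inequality points the wrong way, and one must instead use the fermionic kinetic energy to dominate the density fluctuations $\mathrm{Var}\big(\sum_\ell e^{-\i k\cdot x_\ell}\big)$ that measure the deviation of the true pair interaction from its mean-field value. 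Obtaining this fluctuation estimate with an $N$- and $\alpha$-dependence compatible with the Thomas--Fermi length scale, while keeping the self-energy and Lieb--Yamazaki errors subordinate, is what fixes the exponent $7/3-1/9$.
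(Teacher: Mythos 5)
Your treatment of the phonon field follows the paper's strategy quite closely: Lemma~\ref{StepI} of the paper is exactly your commutator (Lieb--Yamazaki/Lieb--Thomas) bound on the modes above $K$, and Lemma~\ref{StepII} is your completion of the square on the regularized modes, with the diagonal self-energy of order $\alpha N K$ and Newton's theorem bounding the smeared attraction pointwise by $\sqrt{2}\alpha/(1-\delta)$ times the Coulomb kernel. One technical correction here: your stated high-mode bound $\sqrt{\alpha}\sum_\ell \Phi_{>K}(x_\ell)\geq -\delta\sum_\ell(-\tfrac12\Delta_{x_\ell})-c\alpha N/(\delta K)$ is false as an operator inequality on Fock space --- coherent states with large phonon number drive the left side to $-\infty$ at fixed electronic kinetic energy. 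The commutator argument inevitably produces $Z_j^{(\ell)*}Z_j^{(\ell)}\leq \|T_j^{(\ell)}\|^2 H_\ph$, hence an error term of order $\alpha N H_\ph/(\kappa K)$ which must be absorbed into a reserved fraction $\delta H_\ph$ of the phonon energy (or into $H_\ph^{>K}$, if you split $H_\ph$ by momentum). This coupling is what forces $\kappa=8\alpha N I_\infty/(3K\delta)$ in the paper, and the resulting balance $\alpha^2\delta N^{7/3}\sim \alpha N K$ with $K\sim\alpha N/\delta^2$ is precisely what fixes $\delta\sim N^{-1/9}$ and the exponent $\tfrac73-\tfrac19$; your optimization sketch is directionally right but hides this constraint.

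The genuine gap is in your final step. After the reduction one faces $\inf\sigma\big(\tfrac{1-\kappa}{2}\sum_\ell(-\Delta_\ell)-\lambda V_\C\big)$ with $\lambda=\frac{\sqrt2\alpha}{1-\delta}-U$, and the paper disposes of it in one line: a scaling transformation followed by the L\'evy-Leblond gravitational collapse bound \eqref{Gravitation}, quoted from \cite{LevyLeblond1968}, giving $-C_\G\lambda^2(1-\kappa)^{-1}N^{7/3}$ --- no Thomas--Fermi reduction is performed at all on the lower-bound side. You instead propose to rederive such a bound via Lieb--Thirring plus a conversion of the attractive two-body term into its mean-field value $-\lambda D(\rho,\rho)$, and you correctly observe that Lieb--Oxford points the wrong way for attraction and that one would need a kinetic-energy-dominated fluctuation estimate on $\sum_\ell e^{-\i k\cdot x_\ell}$ --- but you never supply that estimate. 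Bounding $\langle V_\C\rangle$ from \emph{above} by $D(\rho,\rho)$ plus controllable errors is not a routine lemma; it is essentially the content of the $-CN^{7/3}$ collapse bound itself, which is why the paper cites it as an input rather than proving it. As written, your argument is therefore incomplete at exactly its hardest point. The repair is simple: invoke \eqref{Gravitation} directly, which also produces the constant $C_\G$ appearing in the statement of Theorem~\ref{main2}, whereas your route (even if completed) would yield a Thomas--Fermi constant that you would still need to track; for the purposes of ``there exists $C$'' either constant suffices, but only if the fluctuation estimate were actually proved.
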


This lower bound is obtained, essentially, by completing the square
with respect to creation and annihilation operators in 
the expression $H_\ph+\sqrt{\alpha}\sum_{j=1}^n\Phi(x_j)$. The
computation brings out an effective Coulomb interaction with coupling strength $-\sqrt{2}\alpha$.
Unfortunately, it also yields an infinite self-energy, which must be dealt with before 
completing the square. For that we use a commutator argument from \cite{LiebThomas1997}, 
which is responsible for the error term in Theorem~\ref{main2}. 
The resulting effective Hamiltonian with an attractive Coulomb
potential is bounded below by the 'gravitational collapse' bound 
\begin{equation}\label{Gravitation}
\sum_{j=1}^N-\tfrac12\Delta_j - \sum_{1\leq j<\ell\leq N}\frac1{|x_j-x_\ell|}
\geq -C_\G N^{\frac73}
\end{equation}
due to L{\'e}vy-Leblond \cite[Theorem 2]{LevyLeblond1968}. 
Hence the presence of the constant $C_\G$ in Theorem~\ref{main2}.

We now turn to the physical regime $\sqrt{2}\alpha<U$. Here our lower
bound is a byproduct of our proof of Theorem~\ref{main2}, and we have no reason to believe it is optimal.
Together with Theorem~\ref{main1} it demonstrates, however, that the model undergoes a 
sharp transition at $\alpha = U/\sqrt{2}$.

\begin{theorem}\label{main3} For $0< \sqrt{2}\alpha < U$,
\[
E_N\geq -\big(\tfrac{16}{3\pi}\alpha^2 N^2 + 3\big)\frac{U}{U-\sqrt{2}\alpha}.
\]
\end{theorem}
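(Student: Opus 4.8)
The plan is to eliminate the phonon field by completing the square, reusing the mechanism behind the proof of Theorem~\ref{main2}, and then to exploit that in the physical regime $\sqrt2\alpha<U$ the resulting effective electron--electron interaction is repulsive. Setting $F_k=\frac{\sqrt\alpha}{c_0|k|}\sum_{\ell=1}^N e^{-\i k\cdot x_\ell}$, one has, at least formally,
\[
H_\ph+\sqrt\alpha\sum_{\ell=1}^N\Phi(x_\ell)=\int_{\RR^3}\big(a^*(k)+\overline{F_k}\big)\big(a(k)+F_k\big)\,dk-\int_{\RR^3}|F_k|^2\,dk .
\]
The first term is a shifted number operator, hence nonnegative. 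In the c-number $\int|F_k|^2\,dk$ the off-diagonal part $\ell\neq\ell'$ produces, via $\int_{\RR^3}|k|^{-2}e^{-\i k\cdot r}\,dk=2\pi^2/|r|$ and $c_0^2=2\sqrt2\,\pi^2$, exactly $\sqrt2\,\alpha\,V_\C$, while the diagonal part $\ell=\ell'$ is the divergent self-energy $\frac{\alpha N}{c_0^2}\int|k|^{-2}\,dk$. Ignoring the self-energy for a moment this gives
\[
H_N\ \geq\ \sum_{\ell=1}^N\big(-\tfrac12\Delta_\ell\big)+(U-\sqrt2\alpha)V_\C ,
\]
and since $U-\sqrt2\alpha>0$ both terms on the right are nonnegative. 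Thus the sole obstruction to a finite lower bound is the self-energy, and the whole problem is to renormalise it.

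To make this rigorous I would not complete the square with the full $F_k$ (which is not in $L^2$) but with its restriction $F_k\mathbf 1_{|k|\leq K}$ to a ball of radius $K$. Completing the square only on the modes $|k|\leq K$ replaces $\int|F_k|^2\,dk$ by $\int_{|k|\leq K}|F_k|^2\,dk$, whose diagonal part is the \emph{finite} self-energy $\frac{4\pi\alpha NK}{c_0^2}=\frac{\sqrt2}{\pi}\alpha NK$ and whose off-diagonal part is a cut-off Coulomb attraction controlled by $\sqrt2\alpha V_\C$ up to a short-range (range $\sim 1/K$) correction. The remaining high-momentum coupling $\sqrt\alpha\sum_\ell\Phi^{>K}(x_\ell)$ is the genuinely hard piece: it cannot be treated by completing the square, since its self-energy still diverges, and here I would invoke the commutator/pull-through estimate of Lieb--Thomas, in which the electron kinetic energy $\sum_\ell(-\tfrac12\Delta_\ell)$ supplies the additional decay in $|k|$ needed to render the high-mode self-energy convergent. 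Because in the physical regime the kinetic energy is an otherwise unused nonnegative quantity, the error it produces is additive and of lower order.

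Collecting the estimates yields a bound of the schematic form
\[
H_N\ \geq\ -\tfrac{\sqrt2}{\pi}\alpha NK+(U-\sqrt2\alpha)V_\C-(\text{short-range residual attraction})-(\text{commutator error}),
\]
where the last two terms are dominated by the net repulsion $(U-\sqrt2\alpha)V_\C$ once $K$ is large enough. The optimal choice balances the linear-in-$K$ self-energy against the strength $U-\sqrt2\alpha$ of the repulsion, giving $K\sim \alpha N\,U/(U-\sqrt2\alpha)$; substituting this turns the self-energy into a term of order $\alpha^2N^2\,U/(U-\sqrt2\alpha)$, and tracking the constants produces the prefactor $\tfrac{16}{3\pi}$ and the additive $3$.

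The main obstacle is renormalising the divergent self-energy while keeping the coefficient of the mediated attraction sharp at $\sqrt2\alpha$: the cheap estimates either diverge (full completion of the square) or spoil the constant (crude cut-offs). Hence the delicate Lieb--Thomas commutator bound for the high-momentum modes is essential, and it is this step, together with the optimisation of $K$ against the weak residual repulsion near the threshold $U=\sqrt2\alpha$, that generates the factor $U/(U-\sqrt2\alpha)$ and is also why the resulting bound is presumably far from optimal.
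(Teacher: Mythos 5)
Your overall strategy --- splitting the phonon modes at a scale $K$, treating the high modes with the Lieb--Thomas commutator argument, completing the square on the low modes, and taking $K\sim\alpha N\,U/(U-\sqrt2\alpha)$ --- is exactly the paper's (Lemmas~\ref{StepI} and \ref{StepII}), but two of your steps fail as stated.

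First, the sharp cutoff $\chi_{|k|\le K}$ spoils the constant in precisely the regime where the theorem is delicate. One has $\int_{|k|\le K}|k|^{-2}e^{\i k\cdot r}\,dk=\frac{4\pi}{|r|}\,\mathrm{Si}(K|r|)$, and $\sup_{x>0}\mathrm{Si}(x)=\mathrm{Si}(\pi)\approx 1.85>\pi/2$, so at distances $|r|\sim 1/K$ the cutoff kernel \emph{exceeds} the full Coulomb value $2\pi^2/|r|$ by a fixed factor independent of $K$: the mediated attraction overshoots $\sqrt2\alpha V_\C$ by a short-range term of Coulomb strength of order $\alpha$. Your claim that this residual attraction is ``dominated by the net repulsion $(U-\sqrt2\alpha)V_\C$ once $K$ is large enough'' is false near threshold, where $U-\sqrt2\alpha\ll\alpha$; no choice of $K$ shrinks the overshoot factor, only its range. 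This is exactly why the paper replaces the sharp cutoff by the Gaussian $e^{-|k|^2/4K^2}$ (it remarks explicitly on this deviation from Lieb--Thomas): the low-mode kernel is then the electrostatic energy of two \emph{nonnegative} spherical charge distributions, and Newton's theorem gives the pointwise bound $\int_{\RR^3}e^{-|k|^2/2K^2}|k|^{-2}e^{\i k\cdot(x_j-x_\ell)}\,dk\le 2\pi^2/|x_j-x_\ell|$ with no residual term at all. Note also that in the paper's proof of this theorem there is no spare kinetic energy left to absorb such a residual: the choice $\kappa=1$ spends all of it on the commutator bound.

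Second, your accounting misplaces the origin of the factor $U/(U-\sqrt2\alpha)$. The Lieb--Thomas estimate costs not only kinetic energy but also a fraction $\delta H_\ph$ of the phonon energy plus the constant $3/(2\delta)$ (Lemma~\ref{StepI}); completing the square with only $(1-\delta)H_\ph$ then \emph{inflates} the mediated attraction to $\frac{\sqrt2\alpha}{1-\delta}V_\C$ (Lemma~\ref{StepII}). Keeping $\frac{\sqrt2\alpha}{1-\delta}\le U$ forces $\delta$ small near threshold --- the paper takes $\delta=(U-\sqrt2\alpha)/(2U)$ --- and it is this constraint, not a balance of the self-energy against the repulsion (the repulsion, being nonnegative, is simply dropped), that produces the prefactor: $K=8\alpha NI_\infty/(3\delta)$ is fixed by the requirement $\kappa=1$, so the self-energy $2\alpha NK/\sqrt\pi$ becomes the $\frac{16}{3\pi}\alpha^2N^2\,U/(U-\sqrt2\alpha)$ term, while $3/(2\delta)$ becomes the additive $3U/(U-\sqrt2\alpha)$. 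Since your sketch tracks no $\delta H_\ph$ borrowing and no $3/(2\delta)$ term, it cannot generate the additive constant $3$ at all, and the claimed derivation of $\frac{16}{3\pi}$ is not available along the route you describe.
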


Last but not least there are the following universal variational upper
bounds for $E_N$ and $E_{N+M}$.

\begin{theorem}\label{main4} For all $N,M$,  $\alpha$ and $U$ we have
\begin{align*}
  E_N &\leq -\alpha N,\\
  E_{N+M} &\leq E_N+E_M.
\end{align*}
\end{theorem}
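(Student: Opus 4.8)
\emph{Strategy and reduction.} The second inequality is subadditivity of $N\mapsto E_N$, and the first one follows from it: once $E_{N+M}\le E_N+E_M$ is known, iteration gives $E_N\le N\,E_1$, so it suffices to add the one–particle bound $E_1\le-\alpha$. Thus the plan is to prove (i) $E_1\le-\alpha$ and (ii) $E_{N+M}\le E_N+E_M$. Both are pure upper bounds, obtained by exhibiting trial states, and in both I would first work with the ultraviolet–regularized operators $H_{N,\Lambda}$, whose interaction $\Phi_\Lambda$ has the square–integrable form factor $h^\Lambda_x(k)=c_0^{-1}|k|^{-1}e^{-\i k\cdot x}\one_{|k|\le\Lambda}$, and pass to the limit at the end using \eqref{removecut}. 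The regularization is convenient because it makes $\sqrt\alpha\sum_\ell\Phi_\Lambda(x_\ell)$ infinitesimally form–bounded with respect to $H_\ph$, so that energy bounds control the phonon number and the UV subtleties disappear.

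\emph{Proof of $E_1\le-\alpha$.} Here I would use the Lee--Low--Pines transformation. The total momentum $P=-\i\nabla_x+P_\ph$, with $P_\ph=\int k\,a^*(k)a(k)\,dk$, is conserved, and the unitary $U=\exp(\i x\cdot P_\ph)$ removes the electron coordinate from the interaction, so that in the fibre of total momentum $\bzero$ the transformed cutoff operator becomes $H_\Lambda(\bzero)=\tfrac12 P_\ph^2+H_\ph+\sqrt\alpha\,\Phi_\Lambda(0)$. Testing this against a phonon coherent state $W(g)\Omega$ with $g$ real, radial and supported in $|k|\le\Lambda$ turns the energy into the explicit functional $\tfrac12\int|k|^2 g(k)^2\,dk+\int g(k)^2\,dk+\tfrac{2\sqrt\alpha}{c_0}\int|k|^{-1}g(k)\,dk$; minimizing pointwise in $g$ gives $-\tfrac{\alpha}{c_0^2}\int_{|k|\le\Lambda}|k|^{-2}(1+\tfrac12|k|^2)^{-1}\,dk$. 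Since $\int_{\RR^3}|k|^{-2}(1+\tfrac12|k|^2)^{-1}\,dk=c_0^2$, the calibration $c_0=2^{3/4}\pi$ makes this exactly $-\alpha$ in the limit. Hence $E_{1,\Lambda}\le\inf\sigma(H_\Lambda(\bzero))\le-\alpha+o_\Lambda(1)$, and \eqref{removecut} yields $E_1\le-\alpha$.

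\emph{Proof of subadditivity.} Here I would argue by spatial separation. Fix $\eps>0$ and choose normalized approximate minimizers $\Psi_N,\Psi_M$ of $H_{N,\Lambda}$, $H_{M,\Lambda}$ with energies at most $E_{N,\Lambda}+\eps$ and $E_{M,\Lambda}+\eps$. By the form bound these have finite expected phonon number, so after projecting onto a bounded phonon sector and confining the electrons to a large ball (at kinetic cost $O(L^{-2})$) I may assume they carry finitely many phonons and have electrons supported in $B_L$. Let $T_{\boR}$ translate the $M$–system by $\boR$, acting on phonons by $a^*(k)\mapsto e^{-\i k\cdot\boR}a^*(k)$, and take as trial state for $H_{N+M,\Lambda}$ the normalized state obtained by antisymmetrizing the electron factors of $\Psi_N$ and $T_{\boR}\Psi_M$ and merging the two phonon factors through their creation operators. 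As $|\boR|\to\infty$ the electron supports separate, so antisymmetrization reduces to a tensor product and the Coulomb cross energy decays like $1/|\boR|$, while the phonon cross terms in $H_\ph$ and the electron--phonon cross energies all carry oscillatory factors $e^{\i k\cdot\boR}$ and vanish by Riemann--Lebesgue. Thus the trial energy tends to $E_{N,\Lambda}+E_{M,\Lambda}+2\eps$; letting $\eps\to0$ and then $\Lambda\to\infty$ gives $E_{N+M}\le E_N+E_M$, and combined with (i) and iteration, $E_N\le N E_1\le-\alpha N$.

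\emph{Main obstacle.} The delicate point is precisely this merge of two states in the single shared Fock space $\cF$: for entangled approximate minimizers it is not a plain tensor product. The clean way is to arrange that after translation the two phonon clouds occupy asymptotically orthogonal subspaces of $L^2(\RR^3)$, so that $\cF$ factorizes and the two states sit in complementary factors; equivalently, one performs the merge at the level of creation operators and controls all errors through cross inner products of the form $\langle g_N,\,e^{\i k\cdot\boR}g_M\rangle\to0$. Establishing enough spatial localization of the phonon cloud of an approximate minimizer, uniformly in the finite–phonon truncation, and verifying that the merge is asymptotically norm–preserving and energy–additive, is the step that requires genuine care; once it is in place, the remaining cross–term estimates are routine applications of Riemann--Lebesgue and of the $1/|\boR|$ decay of the Coulomb tail.
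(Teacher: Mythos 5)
Your proposal is correct and in substance coincides with the paper's proof: for $E_1\le-\alpha$ the paper conjugates $H_{1,\Lambda}$ by Nelson's Gross-type transformation $e^{\i B_\Lambda}$ and tests on $f\otimes\Omega$, which is the same dressed trial state as your Lee--Low--Pines zero-momentum coherent state and yields the identical self-energy $-\alpha e_\Lambda$ with $e_\Lambda=c_0^{-2}\int_{|k|\le\Lambda}|k|^{-2}(1+\tfrac12 k^2)^{-1}dk\to1$ (your check $\int_{\RR^3}|k|^{-2}(1+\tfrac12 k^2)^{-1}dk=2\sqrt{2}\pi^2=c_0^2$ is exactly the calibration used there). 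For $E_{N+M}\le E_N+E_M$ the paper likewise argues by spatial separation, importing the proof of Theorem~6 of \cite{Griesemer2004}, whose core is precisely the decay of the boson-mediated electron--electron interaction and the Fock-space merge you flag as the main obstacle, simplified in the present model because the dispersion $\omega_{\mathrm{LO}}$ is constant and hence local in the boson position variable $\i\nabla_k$.
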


The bound $E_1\leq -\alpha$ is well known from \cite{LeeLowPines1953,
  Feynman1955} and it agrees with the result of a formal
computation of $E_1$ by second order perturbation theory
\cite{Feynman1972}. Also, it is consistent with Haga's computation of
$E_1$ including $\alpha^2$-terms\footnote{There is a sign error in
  Feynman's quote of Haga's result.} \cite{Haga1954}. The bound
$E_N\leq -\alpha N$ follows from the estimates $E_1\leq -\alpha$ and $E_N\leq
NE_1$, the latter of which is a consequence of 
the second result of Theorem~\ref{main4}. We remark that $E_{N+M}\leq E_N+E_M$ holds quite generally for translation
invariant $N$-particle systems with interactions that go to zero with
increasing particle separation. In particular it holds for
fermions and for distinguishable particles alike. 
Numerically computed upper bounds on $E(N)/N$, for $N=2$ through $N=32$ can be found in the literature
\cite{BrosensKliminDevreese2008}, but in the case of fermions they are
not refined enough to be consistent with the bound $E_{N+M}\leq E_N+E_M$.

In this paper we have omitted spin, but the Fermi statistics is taken
into account. There are only few small modifications necessary for treating 
fermions with $q$ spin states, such as factor of $q^{-2/3}$ in front of the Thomas-Fermi kinetic energy, 
which alters the upper bound in Theorem~\ref{main1} by a factor of $q^{2/3}$. 

The many-polaron model has also been studied with a confining potential of the form
$\sum_{\ell=1}^N W(x_\ell)$, $W(0)=0$ and $W\geq 0$ included in the
Hamiltonian \cite{KliminFominBrosensDevreese2004}. We could include such a potential in our work as well, but, at least in the regime
$\sqrt{2}\alpha> U$ this would not affect the leading large $N$
behaviour of $E_N$.




\secct{Upper bounds on $E_N$}

In this section we prove Theorem~\ref{main1} and Theorem~\ref{main4}. Since $E_N=\lim_{\Lambda\to\infty}E_{N,\Lambda}$ we only need to deal with 
the self-adjoint operator $H_{N,\Lambda}$. Let $f\in \cD_N = \wedge^N L^2(\RR^3)\cap H^1(\RR^{3N})$ be normalized and recall that
the one-particle density matrix $\gamma$ and the density function $\rho$ associated with 
$f$ are defined by
\begin{align}\label{eq:opdm}
    \gamma(x,x') &:= N\int_{\RR^{3(N-1)}} f(x,x_2,\dots,x_N)\overline{f(x',x_2,\dots,x_N)}
dx_2\cdots dx_N,\\
    \rho(x)&:= \gamma(x,x) = N\int_{\RR^{3(N-1)}} |f(x,x_2,\dots,x_N)|^2 dx_2\cdots
    dx_N.\label{eq:pd}
\end{align}
In this paper the Fouriertransform $\hat{\rho}$ of the density function $\rho$, or
of any other function, is defined by:
$$
     \hat{\rho}(k) = \int_{\RR^3} e^{-\i k\cdot x}\rho(x)dx,
$$ 
that is, without a factor of $(2\pi)^{-3/2}$.

\begin{proposition}\label{EN-bound}
Suppose $\sqrt{2}\alpha\geq U$. Then for every one-particle density matrix $\gamma$ on $L^2(\RR^3)$
with $0\leq \gamma\leq 1$, $\Tr[\gamma]=N$,
$\Tr[-\Delta\gamma]<\infty$, and for $\rho(x):=\gamma(x,x)$,
\begin{eqnarray*}
 E_N &\leq & (\sqrt{2}\alpha-U)^2\left[\tfrac12\Tr[-\Delta\gamma] -
 \tfrac{1}{2}\int_{\RR^6}\frac{\rho(x)\rho(y)}{|x-y|}\,dx dy\right]\\ 
&& -U(\sqrt{2}\alpha-U)\tfrac{1}{2}\int_{\RR^6}\frac{|\gamma(x,y)|^2}{|x-y|}\,dxdy.
\end{eqnarray*}
\end{proposition}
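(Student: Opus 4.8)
The plan is to test $H_N$ (equivalently $H_{N,\Lambda}$, since $E_N=\lim_{\Lambda\to\infty}E_{N,\Lambda}$) against a Pekar product state $\Phi\otimes\eta\in\wedge^N L^2(\RR^3)\otimes\cF$, to minimize explicitly over the phonon factor $\eta$, and to reduce a general density matrix to a Slater determinant by a concavity argument. First I would take $\Phi=u_1\wedge\cdots\wedge u_N$ a Slater determinant, so that its one-particle density matrix $\gamma=\sum_j|u_j\rangle\langle u_j|$ is a rank-$N$ orthogonal projection, $\rho(x)=\sum_j|u_j(x)|^2$, and the usual Hartree-Fock identities read $\langle\Phi,\sum_\ell(-\tfrac12\Delta_\ell)\Phi\rangle=\tfrac12\Tr[-\Delta\gamma]$ and $\langle\Phi,V_\C\Phi\rangle=\tfrac12\iint\frac{\rho(x)\rho(y)-|\gamma(x,y)|^2}{|x-y|}\,dx\,dy$.

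For the phonon factor I would use the coherent state $\eta$ fixed by $\langle\eta,a(k)\eta\rangle=z(k)$. Since $\langle\eta,H_\ph\eta\rangle=\int\|a(k)\eta\|^2\,dk\geq\int|z(k)|^2\,dk$, with equality for coherent states, and since the electronic expectation of $\sum_\ell e^{\pm\i k\cdot x_\ell}$ produces $\overline{\hrho(k)}$ and $\hrho(k)$, the cross term equals $2\sqrt\alpha\,\rRe\int_{|k|\leq\Lambda}\frac{\overline{\hrho(k)}}{c_0|k|}z(k)\,dk$. Completing the square pointwise in $k$ then selects $z(k)=-\sqrt\alpha\,\hrho(k)/(c_0|k|)$ and contributes $-\int_{|k|\leq\Lambda}\alpha|\hrho(k)|^2/(c_0^2|k|^2)\,dk$ to the energy. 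The key arithmetic is to evaluate this as $\Lambda\to\infty$: using $\int e^{-\i k\cdot x}|x|^{-1}\,dx=4\pi|k|^{-2}$ and Plancherel one gets $\int|\hrho(k)|^2|k|^{-2}\,dk=2\pi^2\iint\frac{\rho(x)\rho(y)}{|x-y|}\,dx\,dy$, and with $c_0^2=2^{3/2}\pi^2$ the contribution is exactly $-\tfrac{\sqrt2}{2}\alpha\iint\frac{\rho(x)\rho(y)}{|x-y|}\,dx\,dy$ — this is where the effective attractive coupling $\sqrt2\alpha$ is born. Adding the kinetic and Coulomb pieces yields, for every Slater determinant,
\begin{equation*}
E_N\leq\tfrac12\Tr[-\Delta\gamma]-(\sqrt2\alpha-U)\tfrac12\iint\frac{\rho(x)\rho(y)}{|x-y|}\,dx\,dy-\tfrac{U}{2}\iint\frac{|\gamma(x,y)|^2}{|x-y|}\,dx\,dy=:\cE(\gamma).
\end{equation*}

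To reach a general admissible $\gamma$ I would use that $\cE$ is concave on the convex set $\{0\leq\gamma\leq\one,\ \Tr\gamma=N\}$: the kinetic term is linear, while the two Coulomb integrals are nonnegative quadratic forms in $\gamma$ entering with the nonpositive prefactors $-(\sqrt2\alpha-U)$ and $-U$. Because the extreme points of this set are precisely the rank-$N$ projections, i.e. Slater-determinant density matrices, one has $\inf_\gamma\cE(\gamma)=\inf_{\gamma\,\mathrm{proj}}\cE(\gamma)$, so the Slater-determinant bound upgrades to $E_N\leq\cE(\gamma)$ for every admissible $\gamma$. Finally I would apply this to the dilated density matrix $\gamma_\lambda(x,y):=\lambda^3\gamma(\lambda x,\lambda y)$, which is again admissible with $\Tr[-\Delta\gamma_\lambda]=\lambda^2\Tr[-\Delta\gamma]$ and both Coulomb integrals multiplied by $\lambda$; the choice $\lambda=\sqrt2\alpha-U$ turns $\cE(\gamma_\lambda)$ into the asserted right-hand side with prefactors $(\sqrt2\alpha-U)^2$ and $U(\sqrt2\alpha-U)$ (the borderline $\sqrt2\alpha=U$ following by letting $\lambda\to0^+$).

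The main obstacle I expect is the passage from Slater determinants to a general $\gamma$: one must justify the concavity/extreme-point reduction in infinite dimensions, that is, that every admissible $\gamma$ decomposes into rank-$N$ projections compatibly with the constraint $\Tr[-\Delta\gamma]<\infty$, so that the inequality $\cE(\gamma)\geq\sum_j p_j\cE(P_j)$ is legitimate. The phonon minimization itself is a routine completion of the square once the bound $\langle\eta,H_\ph\eta\rangle\geq\int|\langle\eta,a(k)\eta\rangle|^2\,dk$ is in hand; the only delicate point there is to track the constant $c_0$ correctly, so that the mediated attraction emerges as exactly $\sqrt2\alpha$ rather than some other multiple of $\alpha$.
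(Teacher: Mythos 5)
Your proposal is correct, and most of it — the Pekar product ansatz, the explicit coherent-state minimization over the phonon factor, the Plancherel arithmetic (your constants $c_0^2=2^{3/2}\pi^2$ and $\int|\hrho(k)|^2|k|^{-2}dk=2\pi^2\iint\rho(x)\rho(y)|x-y|^{-1}dxdy$ match the paper's \eqref{Plancherel}, so the mediated coupling is indeed exactly $\sqrt2\alpha$), and the final dilation with $\lambda=\sqrt2\alpha-U$ — coincides with the paper's proof. The one step where you take a genuinely different route is the passage to a general admissible $\gamma$. The paper fixes $\eta$ first, regards $V_{\Lambda,\eta}$ as an external potential in the purely repulsive $N$-body Hamiltonian $H_{N,\Lambda}^{\eta}$, and invokes Lieb's variational principle \cite{Lieb1981a} in the form of \cite[Corollary 1]{Bach1992}, which gives $\inf\sigma(H_{N,\Lambda}^\eta)\leq\cE_{HF}^{N,\Lambda}(\gamma,\eta)$ for \emph{every} $\gamma$ with $0\leq\gamma\leq\one$, $\Tr\gamma=N$ — a statement that needs no concavity and holds irrespective of the sign of the net direct term; only afterwards does the paper minimize over $\eta$, which is possible because the coupling to the field depends on $f$ only through $\rho$. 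You instead minimize over $\eta$ at fixed Slater determinant (where the Hartree--Fock expressions are identities, so no variational principle is needed) and then upgrade by concavity plus an extreme-point decomposition — an argument that is available only because the hypothesis $\sqrt2\alpha\geq U$ makes both quadratic forms enter $\cE$ with nonpositive coefficients. What each buys: your route avoids Lieb's theorem entirely, at the price of the decomposition lemma you correctly flag as the main obstacle; the paper's citation closes that issue in one stroke and remains valid where your sign structure fails. Your flagged gap is real but routine to fill: diagonalize $\gamma$, note that its eigenvalue sequence lies in $[0,1]^{\NN}$ with sum $N\in\NN$ and is therefore an average $\int\chi_\omega\,d\mu(\omega)$ of $\{0,1\}$-sequences with exactly $N$ ones, so $\gamma=\int P(\omega)\,d\mu(\omega)$ with each $P(\omega)$ a rank-$N$ projection onto eigenfunctions of $\gamma$; these eigenfunctions lie in $H^1(\RR^3)$ since $\Tr[-\Delta\gamma]<\infty$, the kinetic term is linear in $\gamma$, and Jensen's inequality applies to the two nonnegative quadratic forms, giving $\cE(\gamma)\geq\int\cE(P(\omega))\,d\mu(\omega)\geq E_N$ as you need. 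With that lemma supplied (or replaced by the citation to \cite{Lieb1981a,Bach1992}), your proof is complete and yields exactly the paper's intermediate bound \eqref{old-result} and, after scaling, the Proposition, including the borderline case $\sqrt2\alpha=U$ via $\lambda\to0^+$.
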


\begin{proof} 
This proof is based on the estimate $E_{N,\Lambda}\leq\la f\otimes
\eta, H_{N,\Lambda} f\otimes\eta\ra$ for suitable normalized $f\in\cD_N$ and $\eta\in\cF$. 
We begin by observing that the expectation value of the interaction
operator in a state $f\otimes\eta$ may be represented in the following
two ways: if $f$ and $\eta$ are normalized, then
\begin{eqnarray}
   \lefteqn{ \Big\la f\otimes \eta,\sum_{\ell=1}^N\Phi_{\Lambda}(x_\ell) f\otimes \eta\Big\ra}
\nonumber\\
    &=& \int_{\RR^{3N}} |f(x_1,\ldots,x_N)|^2 \sum_{\ell=1}^N V_{\Lambda,\eta}(x_\ell)\, dx_1\ldots dx_N\label{int1}\\
    &=&\la\eta, \Phi_{\Lambda}(\rho) \eta\ra \label{int2}
\end{eqnarray}
where $V_{\Lambda,\eta}(x):= \la\eta,\Phi_{\Lambda}(x) \eta \ra$,  $\rho$ is the density associated with $f$, and 
\begin{eqnarray*}
    \Phi_{\Lambda}(\rho) &:=&  \int_{\RR^3} \rho(x)
    \la\eta,\Phi_{\Lambda}(x)\eta\ra\, dx\\
    &=& \int_{|k|\leq \Lambda}\frac{1}{c_0|k|}\big[\overline{\hat\rho(k)}a(k)+\hat\rho(k)a^{*}(k)\big]dk.
\end{eqnarray*}
Hence if we define 
$H_{N,\Lambda}^{\eta}:=\sum_{\ell=1}^N[-\frac12\Delta_\ell+\sqrt{\alpha}V_{\Lambda,\eta}(x_\ell)]
+U V_C$, then 
\begin{equation}\label{eta-ham}
    \la f\otimes \eta, H_{N,\Lambda} f\otimes\eta \ra = \la f,  H_{N,\Lambda}^{\eta}f \ra + \la\eta, H_\ph\eta \ra.
\end{equation}
The ground state energy of the $N$-body Hamiltonian
$H_{N,\Lambda}^{\eta}$ is bounded above by its ground state energy
in the Hartree-Fock approximation. By Lieb's variational principle,
\cite{Lieb1981a} and \cite[Corollary 1]{Bach1992}, this Hartree-Fock ground state energy is bounded above by
\begin{equation}\label{HF-functional}
  \cE^{N,\Lambda}_{HF}(\gamma,\eta):=\Tr\left[\big(-\tfrac12\Delta+\sqrt{\alpha}V_{\Lambda,\eta}\big)\gamma\right]
  +\frac{U}{2}\int_{\RR^6}\frac{\rho(x)\rho(y)-|\gamma(x,y)|^2}{|x-y|}\,dxdy
\end{equation}
for any one-particle density matrix $\gamma$ with
$\Tr[\gamma]=N$ and $\rho(x)=\gamma(x,x)$. Hence, in view of \eqref{eta-ham}, we conclude that
\begin{equation}\label{HF-bound}
    E_{N,\Lambda} \leq \cE^{N,\Lambda}_{HF}(\gamma,\eta) + \la\eta, H_\ph\eta \ra
\end{equation}
for all normalized $\eta\in\cF$. In order to minimize the right hand
side with respect to $\eta$, we use that \eqref{int1} equals \eqref{int2}.
It follows, by Lemma~\ref{squares}, that 
\begin{equation}\label{complete-square}
     \inf_{\eta\in\cF,\|\eta\|=1} \big[\sqrt{\alpha}\Tr(V_{\Lambda,\eta}\gamma)+
     \la\eta, H_\ph\eta\ra\big] = -\frac{\alpha}{c_0^2}\int_{|k|\leq \Lambda}\frac{|\hat{\rho}(k)|^2}{|k|^2}\,dk.
\end{equation}
By combining \eqref{HF-functional},
\eqref{HF-bound}, and \eqref{complete-square} and then letting
$\Lambda\to\infty$ we arrive at
\begin{equation}\label{old-result}
  E_{N} \leq \tfrac12\Tr[-\Delta\gamma] +(U-\sqrt{2}\alpha)
 \tfrac{1}{2}\int_{\RR^6}\frac{\rho(x)\rho(y)}{|x-y|}\,dx dy 
 -\frac{U}{2}\int_{\RR^6}\frac{|\gamma(x,y)|^2}{|x-y|}\,dxdy
 \end{equation}
for any one-particle density matrix $\gamma$ with
$\Tr(\gamma)=N$ and $\rho(x)=\gamma(x,x)$. Here \eqref{Plancherel} and \eqref{removecut} were used also.
In the case
$\sqrt{2}\alpha=U$ it is clear from \eqref{old-result} or from
\eqref{eta-ham} with $\eta$ being the vacuum vector, that
$E_N\leq 0$. In the case where $\beta:=\sqrt{2}\alpha-U>0$, we choose
the density matrix $\gamma$ on the form
$\gamma=U_{\beta}\widetilde{\gamma}U_{\beta}^{*}$ with $U_{\beta}$
defined by $(U_{\beta}\varphi)(x):=\beta^{3/2}\varphi(\beta x)$. The Proposition
then follows from $U_{\beta}^{*}\Delta U_{\beta}=\beta^{2}\Delta$ and from
$\gamma(x,y)=\beta^3\widetilde{\gamma}(\beta x,\beta y)$ by a simple
change of variables in the integrals of \eqref{old-result}.
\end{proof}


The second ingredient for proving Theorem~\ref{main1} is the following lemma. 

\begin{lemma}\label{lm:costate}
Let $g\in H^2(\RR^3)$ with $\|g\|=1$. Then for every 
$\rho\in L^1(\RR^3)$ with $\rho\geq 0$ and $\int_{\RR^3}\rho(x)dx=N$
there exists a density matrix $\gamma$ such that $\gamma(x,x)=(\rho*|g|^2)(x)$ and
\begin{equation*}
\Tr[-\Delta\gamma] = \tfrac{3}{5}(6\pi^2)^{\frac23}
\int_{\RR^3}\rho(x)^{\frac53}dx + N\|\nabla g\|^2.
\end{equation*}
\end{lemma}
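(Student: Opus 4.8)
The plan is to build $\gamma$ explicitly from coherent states made out of $g$, following the semiclassical idea that a density matrix saturating the Thomas--Fermi kinetic energy is obtained by filling, at each point $y$, a momentum-space Fermi ball of radius $k_F(y):=(6\pi^2\rho(y))^{1/3}$. For $y,p\in\RR^3$ set $\psi_{y,p}(x):=e^{\i p\cdot x}g(x-y)$; since $\|g\|=1$ each $\psi_{y,p}$ is normalized. I would then define
\[
  \gamma:=\frac1{(2\pi)^3}\int_{\RR^3}\!\!\int_{|p|\le k_F(y)} |\psi_{y,p}\ra\la\psi_{y,p}|\,dp\,dy .
\]

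First I would record the coherent-state resolution of the identity: in the paper's Fourier convention, $\la\psi_{y,p},\phi\ra$ is the transform at $p$ of $x\mapsto\overline{g(x-y)}\phi(x)$, so Plancherel and integration over $y$ give $\frac1{(2\pi)^3}\int\!\int|\la\psi_{y,p},\phi\ra|^2\,dp\,dy=\|g\|^2\|\phi\|^2=\|\phi\|^2$ for all $\phi\in L^2(\RR^3)$. Since the Fermi-ball cutoff is an indicator bounded by $1$, comparing the two integrands pointwise at once yields $0\le\gamma\le\one$; this Pauli bound is the one structural point that makes the coherent-state choice essential, and it is what certifies that $\gamma$ is an admissible fermionic density matrix. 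The diagonal is then immediate: the $p$-integral over the ball of radius $k_F(y)$ contributes its volume $\tfrac43\pi k_F(y)^3=(2\pi)^3\rho(y)$, so $\gamma(x,x)=\int\rho(y)|g(x-y)|^2\,dy=(\rho*|g|^2)(x)$, and in particular $\Tr[\gamma]=\int\rho\int|g|^2=N$.

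It remains to evaluate $\Tr[-\Delta\gamma]=\frac1{(2\pi)^3}\int\!\int_{|p|\le k_F(y)}\|\nabla\psi_{y,p}\|^2\,dp\,dy$. Writing $\nabla\psi_{y,p}=\i p\,\psi_{y,p}+e^{\i p\cdot x}(\nabla g)(x-y)$ and integrating over $x$ produces exactly three pieces, $|p|^2\|g\|^2=|p|^2$, the $y$-independent constant $\|\nabla g\|^2$, and a cross term $2p\cdot\!\int\im[\bar g\,\nabla g]$ proportional to the current of $g$. The decisive observation is that this linear-in-$p$ term integrates to zero over the symmetric Fermi ball and thus drops out, which is why no current contamination survives. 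Using $\int_{|p|\le k_F}|p|^2\,dp=\tfrac{4\pi}{5}k_F^5$ with $k_F(y)^5=(6\pi^2)^{5/3}\rho(y)^{5/3}$ gives precisely $\tfrac35(6\pi^2)^{2/3}\int\rho^{5/3}$, while the $\|\nabla g\|^2$ term, weighted by the ball volume as above, gives $N\|\nabla g\|^2$; summing them yields the stated identity. I expect the only care needed to be the constant bookkeeping and the justification that the cross term vanishes, rather than any genuine analytic difficulty: $g\in H^2$ ensures $\nabla\psi_{y,p}\in L^2$ and $\rho\in L^1$ ensures $\Tr[\gamma]<\infty$, the identity holding trivially with both sides infinite when $\int\rho^{5/3}=\infty$.
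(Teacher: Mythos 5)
Your construction is exactly the paper's proof (borrowed from Lieb's Thomas--Fermi review): the same coherent-state density matrix $\gamma=(2\pi)^{-3}\int M(p,q)\,\Pi_{pq}\,dp\,dq$ with $\Pi_{pq}$ the projection onto $e^{\i p\cdot x}g(x-q)$ and a Fermi-ball cutoff, with the same three-term expansion of $\Tr[-\Delta\Pi_{pq}]$ and the cross term killed by the symmetry of the ball. Your version is correct, and in fact your Fermi radius $k_F(y)=(6\pi^2\rho(y))^{1/3}$ is the right constant --- the paper's displayed $(6\pi)^{2/3}\rho(q)^{1/3}$ is a typo, as one checks from the normalization $(2\pi)^{-3}\tfrac{4\pi}{3}k_F^3=\rho$; your explicit verification of $0\le\gamma\le\one$ via the resolution of the identity is a point the paper leaves implicit.
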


\begin{proof}
For the reader's convenience, we recall the proof from \cite[Page
621]{Lieb1981b}. Let $M:\RR^6\to \RR$ be defined by $M(p,q)=1$ if
$|p|\leq (6\pi)^{2/3}\rho(q)^{1/3}$ and $M(p,q)=0$ otherwise. Then 
\begin{eqnarray}
    (2\pi)^{-3}\int_{\RR^6} M(p,q)dpdq &=& \int_{\RR^3}\rho(q)dq = N\nonumber\\
 (2\pi)^{-3}\int_{\RR^6} p^2 M(p,q)dpdq &=& \tfrac{3}{5}(6\pi)^{\frac23}\int_{\RR^3}\rho(q)^{\frac53}dq.\label{phase-space-kE}
\end{eqnarray}
We define $\gamma$ by 
$$
    \gamma =  (2\pi)^{-3}\int_{\RR^6} M(p,q)\Pi_{pq}dpdq
$$
where $\Pi_{pq}$ is the rank one projection given by
$$
   \Pi_{pq}\varphi = g_{pq}\int_{\RR^3}\overline{g_{pq}(x)}\varphi(x)dx,\qquad g_{pq}(x)=e^{\i px}g(x-q).   
$$
It follows that $\gamma(x,x)=\int_{\RR^3}|g(x-q)|^2\rho(q)\,dq$, and 
from
\begin{align*}
  \Tr [-\Delta\Pi_{pq}] = \|\nabla g_{pq}\|^2 = p^2+\|\nabla g\|^2
  +2p\cdot \la g,-\i \nabla g\ra,
\end{align*}
and \eqref{phase-space-kE} we find the asserted expression for $\Tr[-\Delta\gamma]$. 
\end{proof}

\medskip
Proposition~\ref{EN-bound} and Lemma~\ref{lm:costate} suggest the definition of a 
\emph{Polaron Thomas-Fermi functional} by
\begin{equation}\label{eq:PTF2}
   \cE_\PTF(\rho):= \tfrac{3}{10}(6\pi^2)^{\frac23}
   \int_{\RR^3}\rho(x)^{\frac53}\,dx -
   \tfrac{1}{2}\int_{\RR^6}\frac{\rho(x)\rho(y)}{|x-y|}\,dx dy,
\end{equation}
where $\rho\in L^1(\RR^3)\cap L^{5/3}(\RR^3)$ and $\rho\geq
0$. If $\rho_N(x):=N^2\rho(N^{1/3}x)$, then $\|\rho_N\|_1=N\|\rho\|_1$ and 
$$
    \cE_\PTF(\rho_N) = N^{\frac73}\cE_\PTF(\rho). 
$$ 
Hence it suffices to consider
densities $\rho$ with $\int\rho(x)dx=1$. Let
\begin{equation*}
  E_\PTF := \inf\Big\{\cE_\PTF(\rho)\Big| \rho\geq 0,\ 
   \int_{\RR^3}\rho(x)dx=1\Big\}
\end{equation*}
which is finite by Lemma~\ref{lm:rho}.


\begin{lemma}\label{lm:basicTF}
$E_\PTF <0$.
\end{lemma}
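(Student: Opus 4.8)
The plan is to produce a single trial density $\rho\geq 0$ with $\int_{\RR^3}\rho\,dx=1$ and $\cE_\PTF(\rho)<0$; since $E_\PTF$ is the infimum of $\cE_\PTF$ over exactly such densities, one negative value forces $E_\PTF<0$. The mechanism is a mass-preserving scaling that exposes the differing homogeneities of the two terms in $\cE_\PTF$: the kinetic term is quadratic under dilation while the attractive Coulomb term is only linear, so spreading the density out makes the attraction win.

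First I would fix any convenient reference density $\rho_0\geq 0$ with $\int_{\RR^3}\rho_0\,dx=1$ for which both terms of $\cE_\PTF$ are finite; a Gaussian or a smooth compactly supported bump will do. Finiteness of $\int_{\RR^3}\rho_0^{5/3}\,dx$ is immediate, and finiteness of the Coulomb energy follows from the Hardy--Littlewood--Sobolev inequality once one observes that $\rho_0\in L^1(\RR^3)\cap L^{5/3}(\RR^3)$ interpolates into $L^{6/5}(\RR^3)$, the endpoint exponent for the kernel $|x-y|^{-1}$ in three dimensions.

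Next I would introduce the family $\rho_\lambda(x):=\lambda^3\rho_0(\lambda x)$ for $\lambda>0$, which obeys $\int_{\RR^3}\rho_\lambda\,dx=1$ for every $\lambda$. A change of variables shows the kinetic term scales as $\lambda^2$ and the Coulomb term as $\lambda^1$, so that
\[
\cE_\PTF(\rho_\lambda)=\lambda^2\,\tfrac{3}{10}(6\pi^2)^{\frac23}\!\int_{\RR^3}\rho_0^{\frac53}\,dx\;-\;\lambda\,\tfrac12\!\int_{\RR^6}\frac{\rho_0(x)\rho_0(y)}{|x-y|}\,dx\,dy\;=:\;A\lambda^2-B\lambda,
\]
with constants $A,B>0$. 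Since $A\lambda^2-B\lambda=\lambda(A\lambda-B)<0$ for every $0<\lambda<B/A$, choosing any such $\lambda$ yields $E_\PTF\leq\cE_\PTF(\rho_\lambda)<0$.

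There is no genuine obstacle here: the entire content is the scaling identity, and the only point requiring care is that the reference density admits both terms finite simultaneously, which the interpolation/Hardy--Littlewood--Sobolev remark settles. (Note that the strict upper bound $E_\PTF<0$ does not require the finiteness of $E_\PTF$ from Lemma~\ref{lm:rho}; even an a priori value of $-\infty$ would be consistent with the claim.) The conceptual heart of the argument is simply that in the regime $\lambda\to 0$ the linear attractive term dominates the quadratic repulsive one, which is precisely what makes $E_\PTF$ strictly negative.
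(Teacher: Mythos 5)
Your proof is correct and is essentially the paper's own argument: the paper uses the dilation $\rho_R(x)=R^{-3}\rho(R^{-1}x)$ and lets $R\to\infty$, which is exactly your family $\rho_\lambda$ with $\lambda=R^{-1}\to 0$, exploiting the same $\lambda^2$ versus $\lambda$ homogeneity mismatch. Your added remarks on finiteness via Hardy--Littlewood--Sobolev and on not needing $E_\PTF>-\infty$ are sound but do not change the substance.
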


\begin{proof}
Given $\rho\in L^1(\RR^3)\cap L^{5/3}(\RR^3)$ with $\rho\geq 0$ and
$\int\rho\,dx=1$, let $\rho_R(x)=R^{-3}\rho(R^{-1}x)$. Then
$\int_{\RR^3}\rho_{R}(x)dx=1$ for all $R>0$ and 
$$
    \cE_\PTF(\rho_R) =
    R^{-2}\tfrac{3}{10}(6\pi^2)^{\frac23}\int_{\RR^3}\rho(x)^{\frac53}dx-R^{-1}\tfrac{1}{2}
    \int_{\RR^6}\frac{\rho(x)\rho(y)}{|x-y|}\,dx dy.
$$
This is negative for $R$ large enough.
\end{proof}



\begin{proof}[\textbf{Proof of Theorem~\ref{main1}.}]
Let $g\in L^2(\RR^3)$ be given by $g(x)=(2\pi)^{-3/4}e^{-x^2/4}$ and
set $g_{\eps}(x)=\eps^{-3/2}g(x/\eps)$, so that $\|g_{\eps}\|=1$ for
all $\eps>0$. Let $\beta=\sqrt{2}\alpha-U\geq 0$. If $\beta=0$ then
$E_N\leq 0$ by Proposition~\ref{EN-bound}. Hence it remains to
consider the case $\beta>0$. Every density function $\rho_N\in L^1(\RR^3)$ with
$\|\rho_N\|_1=N$ is of the form $\rho_N(x)=N^2\rho(N^{1/3}x)$ with $\|\rho\|_1=1$.
From Proposition~\ref{EN-bound} and Lemma~\ref{lm:costate} combined it
follows that 
\begin{equation}\label{m1-1}
    \beta^{-2}E_N \leq \tfrac{3}{10}(6\pi^2)^{\frac23}\int_{\RR^3}\rho_N(x)^{\frac53}dx -
  \tfrac{1}{2}\int_{\RR^6}\frac{\rho_{N,\eps}(x)\rho_{N,\eps}(y)}{|x-y|}dxdy +N\|\nabla g_{\eps}\|^2,
\end{equation}
where $\rho_{N,\eps}=\rho_N*|g_{\eps}|^2$. Suppose $1<\mu<6/5$ and let $f(k):=\widehat{|g|^2}=e^{-k^2/2}$.
Then $\widehat{\rho_{N,\eps}}(k)=\widehat{\rho_N}(k)\widehat{|g_{\eps}|^2}(k)=\widehat{\rho_N}(k)f(\eps
k)$ and 
\begin{equation}\label{m1-2}
   \sup_{k\neq 0}\frac{1-|f(k)|^2}{|k|^{\mu-1}} \leq 1.
\end{equation}
By definition of $f$, by \eqref{m1-2}, and by Lemma~\ref{lm:rho}
\begin{eqnarray*}
\lefteqn{\int_{\RR^6}\frac{\rho_{N}(x)\rho_{N}(y)}{|x-y|}dxdy-\int_{\RR^6}\frac{\rho_{N,\eps}(x)\rho_{N,\eps}(y)}{|x-y|}dxdy}\\
   &=&\frac{1}{2\pi^2}\int_{\RR^3}(1-|f(\eps k)|^2)\frac{|\widehat{\rho_N}(k)|^2}{|k|^2}\,dk\\
   &=& \frac{1}{2\pi^2}\eps^{\mu-1}\int_{\RR^3}\frac{1-|f(\eps k)|^2}{|\eps
     k|^{\mu-1}}\frac{|\widehat{\rho_N}(k)|^2}{|k|^{3-\mu}}\,dk\\
   &\leq&
   \frac{1}{2\pi^2}\eps^{\mu-1}\int_{\RR^3}\frac{|\widehat{\rho_N}(k)|^2}{|k|^{3-\mu}}\,dk\\
   &=& N^{2+\frac{\mu}{3}}\eps^{\mu-1}2(2\pi)^{\mu-2}\frac{c_{\mu}}{c_{3-\mu}}\int_{\RR^6}\frac{\rho(x)\rho(y)}{|x-y|^{\mu}}\,dxdy.
\end{eqnarray*}
Combining this estimate with \eqref{m1-1}, we see that 
\begin{eqnarray*}
    \beta^{-2}E_N &\leq & N^{\frac73}\cE_\PTF(\rho) + N\eps^{-2}\|\nabla g\|^2 \\
   && + N^{2+\frac{\mu}{3}}\eps^{\mu-1}(2\pi)^{\mu-2}\frac{c_{\mu}}{c_{3-\mu}}\int_{\RR^6}\frac{\rho(x)\rho(y)}{|x-y|^{\mu}}\,dxdy
\end{eqnarray*}
for all $\rho\in L^1(\RR^3)$ with $\|\rho\|_1=1$. If $\{\rho_n\}\subset
L^1(\RR^n)$ is a minimizing sequence, $\cE_\PTF(\rho_n)\to E_\PTF$ as
$n\to\infty$, then  $\|\rho_n\|_{5/3}$ is uniformly bounded
by~\eqref{a1}, and hence so is the term $\int
\rho_n(x)\rho_n(y)/|x-y|^{\mu}dxdy$ for $\mu<6/5$. Therefore, in the limit $n\to\infty$,
we obtain
\begin{eqnarray*}
   \beta^{-2}E_N \leq N^{\frac73} E_\PTF + \tfrac{1}{4}N\eps^{-2} + N^{2+\frac{\mu}{3}}\eps^{\mu-1}C_{\mu}
\end{eqnarray*}
where the constant $C_{\mu}$ is finite for $\mu<6/5$ and $\|\nabla g\|^2=1/4$ was used. Upon optimizing with respect to $\eps$
we arrive at
$$
   \beta^{-2}E_N \leq N^{\frac73}E_\PTF +
   N^{\frac{9+5\mu}{3+3\mu}} D_{\mu}
$$
with a new constant $D_{\mu}$. This bound with the choice $\mu=37/31<6/5$
proves Theorem~\ref{main1}.
\end{proof}

\begin{proof}[\textbf{Proof of Theorem~\ref{main4}}]
We only need to prove that $E_1\leq -\alpha$. The bound $E_N\leq
-\alpha N$ will then follow from $E_{N+M}\leq E_N+E_M$ as pointed out
in the introduction.

Following Nelson \cite{Nelson1964} we introduce
$$
   B_{\Lambda}:= \frac{\sqrt{\alpha}}{c_0} \int_{|k|\leq
     \Lambda}\frac{1}{\i(1+\frac{k^2}2)|k|}\big[e^{\i k\cdot x}a(k) +
   e^{-\i k\cdot x}a^{*}(k)\big]\, dk.
$$
Then
\begin{equation}
  e^{\i B_{\Lambda}} H_{1,\Lambda} e^{-\i B_{\Lambda}}=\tfrac{1}{2}
  \left(p^2 + 2a^{*}\cdot p + 2p\cdot a+ a^2 + (a^{*})^2+2a^{*}a\right)
  +H_\ph -\alpha e_{\Lambda},\label{Hnelson}
\end{equation}
where
\begin{eqnarray*}
    a &:=& \frac{\sqrt{\alpha}}{c_0} 
\int_{|k|\leq \Lambda}\frac{k}{(1+\frac{k^2}2)|k|}e^{\i kx}a(k)\, dk.\\
     e_{\Lambda}  &:=&  \frac{1}{c_0^2} \int_{|k|\leq \Lambda}\frac{1}{|k|^2(1+\frac{k^2}2)}\, dk
\end{eqnarray*}
From \eqref{Hnelson} we see that, for all normalized $f\in L^2(\RR^3)$,
\begin{equation}\label{fxOmega}
     \la f\otimes \Omega, e^{\i B_{\Lambda}} H_{1,\Lambda}
     e^{-\i B_{\Lambda}} f\otimes\Omega\ra = \big\la f,
     (-\tfrac12 \Delta)f\big\ra - \alpha e_{\Lambda}
\end{equation}
where $\Omega\in\cF$ denotes the vacuum vector. 
Since $\inf\sigma(-\Delta)=0$ it follows from \eqref{fxOmega} that $E_{1,\Lambda}\leq -\alpha
e_{\Lambda}$, where
$$
  \lim_{\Lambda\to\infty} e_{\Lambda} 
  = \frac{1}{c_0} \int_{\RR^3}\frac{1}{|k|^2(1+\frac{k^2}2)}\, dk =1.
$$
This concludes the proof of the first bound in
Theorem~\ref{main4}.

A result similar to $E_{N+M}\leq E_N+E_M$ is expressed
by Theorem~6 in \cite{Griesemer2004}. A copy of the proof of that
theorem, with small modifications due to the differences of the
Hamiltonians, also proves the desired bound here. In fact, the main part of
the proof of \cite[Theorem~6]{Griesemer2004} is Equation~(19) and the equation thereafter, which show
that the interaction between electrons mediated by bosons decreases
with increasing particle separation. This part remains valid for
the coupling function $\chi_{|k|\leq\Lambda}/(c_0|k|)$ of
the Hamiltonian $H_{N,\Lambda}$. Other parts of the proof are simplified due
to the fact the phonon dispersion relation $\omega_{LO}$ is constant
and hence a local operator with respect to the boson position as
measured by $\i \nabla_k$.
\end{proof}


\secct{Lower bounds on $E_N$}

In this section we prove Theorems~\ref{main1},
and \ref{main3}. The first step is to make sure that phonons with large momenta 
contribute to lower order in $N$. To this end,
for given $K,\Lambda,\delta,\kappa>0$, we define the operator
\begin{eqnarray*}
H_{N,\Lambda,K} & := &  -\tfrac12(1- \CI)\sum_{\ell=1}^N \Delta_\ell + 
(1-\delta)H_\ph+ U V_\C \\
&&  +
\sqrt{\norm}\sum_{\ell=1}^N\int_{|k|\leq\Lambda}\frac{e^{-\frac{|k|^2}{4K^2}}}{c_0|k|}
\big[e^{\i k\cdot x_\ell}a(k) + e^{-\i k\cdot x_\ell}a^*(k)\big]\,dk.
\end{eqnarray*} 
Of course, later on, $\delta,\kappa\in(0,1)$ and $K<\Lambda\to\infty$.
The following result, in the case $N=1$, is essentially due to Lieb and Thomas 
\cite{LiebThomas1997}.
While a sharp cutoff $|k|\leq K$ is used in  \cite{LiebThomas1997}, we work with 
a Gaussian cutoff since we need the Fourier transform of the cutoff to
be positive.

\begin{lemma}\label{StepI}
Suppose $K,\Lambda,\alpha$ and $U$ are positive, 
$0<\delta<1$ and let $\kappa :=
\frac{8\alpha N}{3K\delta}I_{\infty}$, where
$I_{\infty}:=(\sqrt{2}-1)/\sqrt{\pi}$. Then 
\begin{equation}
H_{N,\Lambda} \geq H_{N,\Lambda,K} - \frac{3}{2\delta}.
\end{equation}
\end{lemma}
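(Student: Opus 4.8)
The plan is to subtract the two operators and reduce everything to a lower bound on the high-momentum interaction. Writing $p_\ell=-\i\nabla_\ell$ and comparing term by term, the kinetic, phonon and Coulomb pieces combine to give
\[
H_{N,\Lambda}-H_{N,\Lambda,K}=\tfrac12\kappa\sum_{\ell=1}^N(-\Delta_\ell)+\delta H_\ph+W,
\]
where $W=\sqrt{\alpha}\sum_{\ell=1}^N\int_{|k|\leq\Lambda}g_K(k)\big[e^{\i k\cdot x_\ell}a(k)+e^{-\i k\cdot x_\ell}a^*(k)\big]\,dk$ is the difference of the sharp and Gaussian-softened couplings, with $g_K(k)=(1-e^{-|k|^2/4K^2})/(c_0|k|)$; equivalently $W=a(h)+a^*(h)$ with $h(k)=\sqrt{\alpha}\,g_K(k)\sum_\ell e^{-\i k\cdot x_\ell}$ and $a(h)=\int\overline{h(k)}\,a(k)\,dk$. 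Since $\tfrac12\kappa\sum(-\Delta_\ell)+\delta H_\ph\geq0$, it suffices to prove $\tfrac12\kappa\sum(-\Delta_\ell)+\delta H_\ph+W\geq-\tfrac{3}{2\delta}$.

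First I would record why the obvious route fails: completing the square in $\delta H_\ph+W$ with a shift proportional to $\sum_\ell e^{-\i k\cdot x_\ell}/\delta$ produces a self-energy $\tfrac{\alpha}{\delta}\int g_K(k)^2\big|\sum_\ell e^{\i k\cdot x_\ell}\big|^2\,dk$ whose diagonal part $\tfrac{\alpha N}{\delta}\int g_K^2\,dk$ diverges in the ultraviolet. Following Lieb--Thomas, the remedy is to spend the reserved electron kinetic energy on the large-$|k|$ modes. Concretely I would use the form inequality $\langle\psi,W\psi\rangle\geq-\langle\psi,A\psi\rangle-\|A^{-1/2}a(h)\psi\|^2$ with $A$ comparable to $\tfrac12\kappa\sum(-\Delta_\ell)+\delta H_\ph$ up to an additive constant, reducing the problem to the boundedness of the self-energy operator $a^*(h)A^{-1}a(h)$.

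The key computation is the pull-through: $A^{-1}a(k)=a(k)(A-\delta)^{-1}$ moves the resolvent past the annihilation operator, and the momentum shift $e^{-\i k\cdot x_\ell}(-\Delta_\ell)e^{\i k\cdot x_\ell}=(p_\ell+k)^2$ turns the kinetic part of $A^{-1}$ into a resolvent with denominator $\tfrac12\kappa|p_\ell+k|^2+\cdots$, which grows like $\tfrac12\kappa|k|^2$ at large $|k|$. This recoil energy is exactly what regularizes the self-energy: the surviving integral has the form $\int g_K(k)^2/|k|^2\,dk$, and the clean evaluation
\[
\int_{\RR^3}\frac{g_K(k)^2}{|k|^2}\,dk=\frac{I_\infty}{K},\qquad
\int_0^\infty\frac{(1-e^{-u^2})^2}{u^2}\,du=(2-\sqrt{2})\sqrt{\pi}
\]
(using $c_0=2^{3/4}\pi$) is what produces the constant $I_\infty=(\sqrt{2}-1)/\sqrt{\pi}$. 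Summing the per-particle diagonal self-energies keeps the estimate linear in $N$ and fixes the kinetic renormalization at $\kappa=\tfrac{8\alpha N}{3\delta}\int g_K^2/|k|^2\,dk=\tfrac{8\alpha N}{3K\delta}I_\infty$; the factor $\delta^{-1}$ enters through $A$ and the completion against $\delta H_\ph$, and optimizing the additive constant in $A$ leaves the $K$- and $N$-independent remainder $\tfrac{3}{2\delta}$.

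The main obstacle is precisely this ultraviolet self-energy. Because an electron momentum can cancel a phonon momentum ($p_\ell+k\approx0$), there is no pointwise-in-$k$ gain from the kinetic energy, so a naive mode-by-mode completion of the square diverges; one genuinely needs the commutator/pull-through structure, with the compensating large electron momentum accounted for inside the reserved $\tfrac12\kappa\sum(-\Delta_\ell)$. A secondary technical point, and the reason for preferring a Gaussian to a sharp cutoff, is that the off-diagonal ($\ell\neq m$) self-energy terms must be controlled: the Gaussian choice makes the relevant convolution kernels have nonnegative Fourier transforms, which is what lets these cross terms be handled while retaining a clean lower bound.
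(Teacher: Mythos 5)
Your setup is correct and matches the paper's: the decomposition $H_{N,\Lambda}-H_{N,\Lambda,K}=\tfrac{\kappa}{2}\sum_\ell(-\Delta_\ell)+\delta H_\ph+W$ is exactly \eqref{H-splitting}, your evaluation $\int_{\RR^3} g_K(k)^2|k|^{-2}dk=I_\infty/K$ with $I_\infty=(\sqrt2-1)/\sqrt\pi$ is right, and you correctly identify the momentum-cancellation $p_\ell+k\approx 0$ as the central obstacle. But the core of the proof is missing. You reduce the lemma to the boundedness of the self-energy operator $a^*(h)A^{-1}a(h)$ and then merely assert it. The pull-through identity $A^{-1}a(k)=a(k)(A-\delta)^{-1}$ together with the shift $e^{-\i k\cdot x_\ell}(-\Delta_\ell)e^{\i k\cdot x_\ell}=(p_\ell+k)^2$ does \emph{not} by itself regularize the self-energy: on the region $p_\ell\approx-k$ the shifted resolvent gives no decay in $|k|$ --- precisely the failure mode you name --- so your claim that ``the surviving integral has the form $\int g_K^2/|k|^2$'' is unjustified. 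That diagonal picture also ignores the $k\neq k'$ off-diagonal terms in $a^*(h)A^{-1}a(h)$ and the $\ell\neq m$ particle cross terms, for which you offer no quantitative mechanism (your appeal to positive Fourier transforms of the Gaussian is a misattribution: in the paper the Gaussian cutoff is needed in Lemma~\ref{StepII}, where Newton's theorem requires the smeared charge distributions to be non-negative; Lemma~\ref{StepI} itself uses no positivity). Finally, the constants are asserted rather than derived: the factor $8/3$ in $\kappa$ and the remainder are never produced by your scheme, and in fact the paper's argument yields the \emph{better} remainder $3\delta/2\leq 3/(2\delta)$, so ``optimizing the additive constant in $A$ leaves $3/(2\delta)$'' is numerology.

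What actually closes the argument in the paper is an elementary commutator identity, not a resolvent bound. One sets $Z_j^{(\ell)}=a(T_j^{(\ell)})$ with $T_j^{(\ell)}(k)=\sqrt{\alpha}\,\chi_\Lambda(k)\,(1-e^{-|k|^2/4K^2})\,k_j\,e^{-\i k\cdot x_\ell}/(c_0|k|^3)$ and verifies the identity \eqref{key-id}, $I_{K,\Lambda}^{(\ell)}=\sum_{j=1}^3[p_{\ell,j},Z_j^{(\ell)}-Z_j^{(\ell)*}]$; Cauchy--Schwarz then gives
\begin{equation*}
\big|\la\eta,I_{K,\Lambda}^{(\ell)}\eta\ra\big|\leq \frac{\kappa}{2}\la\eta,-\Delta_\ell\eta\ra+\frac{4}{\kappa}\sum_{j=1}^3\la\eta,(Z_j^{(\ell)*}Z_j^{(\ell)}+Z_j^{(\ell)}Z_j^{(\ell)*})\eta\ra,
\end{equation*}
and the elementary bounds $a^*(T)a(T)\leq\|T\|^2H_\ph$ and $[Z_j^{(\ell)},Z_j^{(\ell)*}]=\|T_j^{(\ell)}\|^2$ with $\|T_j^{(\ell)}\|^2=\tfrac{\alpha}{3K}I_{\Lambda/K}$ (this angular average is the origin of the $1/3$, hence of the $8/3$ in $\kappa$) finish the proof after summing over $\ell$. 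This per-particle estimate has no cross terms at all, which is why the bound is linear in $N$ with no positivity input. To repair your proposal you would either have to execute the self-energy estimate in full --- and for the Fr\"ohlich form factor $\sim|k|^{-1}$ this is normally done precisely via the Lieb--Thomas commutator you tried to bypass --- or adopt the commutator identity directly, at which point your proof becomes the paper's.
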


\begin{proof}
For each $\ell\in\{1,\ldots, N\}$, we introduce three high
momenta modes by
\begin{eqnarray*}
    Z_j^{(\ell)} &:=&  \int_{\RR^3} \overline{T_{j}^{(\ell)}(k)}a(k)dk,\qquad
    j\in\{1,2,3\}, \\
    T_j^{(\ell)}(k) &:=& \sqrt{\alpha}\chi_{\Lambda}(k)\frac{1-e^{-\frac{|k|^2}{4K^2}}}{c_0|k|^3}k_je^{-\i
      k\cdot x_\ell},
\end{eqnarray*}
$k_j\in\RR$ being the $j$-th component of $k\in\RR^3$ and
$\chi_{\Lambda}$ the characteristic function of the set $\{|k|\leq \Lambda\}$. For later
use we compute the inner product of two functions
$T_{j}^{(\ell)}$. By straightforward computations,
\begin{equation}\label{T-norm}
   \int_{\RR^3} \overline{T_{j}^{(\ell)}(k)}
   T_{j'}^{(\ell)}(k)\, dk = \delta_{jj'} \frac{\alpha}{3K} I_{\frac{\Lambda}{K}},
\end{equation}
where
$$
   I_R:=\frac{\sqrt{2}}{\pi}\int_0^R \frac{(1-e^{-\frac{s^2}4})^2}{s^2}ds.
$$
Note that $4\pi/c_0^2=\sqrt{2}/\pi$ and that $I_{\infty}=\lim_{R\to\infty}I_R=(\sqrt{2}-1)/\sqrt{\pi}$
as defined in the statement of the lemma. By definition of
$H_{N,\Lambda,K}$,
\begin{eqnarray}\label{H-splitting}
     H_{N,\Lambda} &=& H_{N,\Lambda,K} + \sum_{\ell=1}^N
     \Big(-\frac{\kappa}{2}\Delta_\ell+ I_{K,\Lambda}^{(\ell)}\Big) + \delta H_\ph\\
 I_{K,\Lambda}^{(\ell)} &:=& \sqrt{\norm}\int_{|k|\leq\Lambda}\frac{1-e^{-\frac{|k|^2}{4K^2}}}{c_0|k|} \big[e^{\i
k\cdot x_\ell}a(k)+h.c.\big]dk\nonumber
\end{eqnarray}
where we introduced the operators $I_{K,\Lambda}^{(\ell)}$ associated
with the ultraviolet part of the electron-phonon interaction.
The key ingredient of this proof is that
\begin{equation}\label{key-id}
I_{K,\Lambda}^{(\ell)}=
\sum_{j=1}^3\big[p_{\ell,j},Z_j^{(\ell)}-Z_j^{(\ell)*}\big]
\end{equation}
where $p_{\ell,j}:=-i\partial/\partial_{x_{\ell,j}}$. This identity
implies that
\begin{eqnarray}
\big|\la\eta,I_{K,\Lambda}^{(\ell)}\eta\ra\big|  & \leq & 2\sum_{j=1}^3\|
p_{\ell,j}\eta\|
\|(Z_j^{(\ell)}-Z_j^{(\ell)*})\eta\|\nonumber\\
& \leq & \frac{\kappa}2 \la \eta, -\Delta_\ell\eta\ra +
\frac2{\kappa}\sum_{j=1}^3\la\eta,-(Z_j^{(\ell)}-Z_j^{(\ell)*})^2\eta\ra\nonumber\\
& \leq & \frac{\kappa}2 \la \eta, -\Delta_\ell\eta\ra +
\frac4{\kappa}\sum_{j=1}^3\la\eta, (Z_j^{(\ell)*}Z_j^{(\ell)} +
Z_j^{(\ell)} Z_j^{(\ell)*})\eta\ra,\label{key-est}
\end{eqnarray}
where $\kappa>0$ is to be selected, and the estimate
\[
|\la\eta, (Z_j^{(\ell)})^2\eta\ra| \leq
\|Z_j^{(\ell)*}\eta\|\|Z_j^{(\ell)}\eta\| \leq \tfrac12
\la\eta,(Z_j^{(\ell)} Z_j^{(\ell)*} + Z_j^{(\ell)*}
Z_j^{(\ell)})\eta\ra
\]
was used. From \eqref{T-norm} and $I_{\Lambda/K}\leq I_{\infty}$ it is
clear that
\begin{eqnarray}
\sum_{j=1}^3( Z_j^{(\ell)*}Z_j^{(\ell)} + Z_j^{(\ell)}Z_j^{(\ell)*})
& = & \sum_{j=1}^3 2Z_j^{(\ell)*}Z_j^{(\ell)} + \big[Z_j^{(\ell)},Z_j^{(\ell)*}\big]\nonumber\\
& \leq & \frac{2\alpha I_\infty}{3K}H_{\ph} +
\frac{\alpha}{K}I_{\infty}.\label{n-order}
\end{eqnarray}
Combining \eqref{key-est} and \eqref{n-order} we arrive at
$$
\pm\sum_{\ell=1}^N I_{K,\Lambda}^{(\ell)} \leq  \frac{\kappa}{2}\sum_{\ell=1}^N(-\Delta_\ell) + 
\frac{8\alpha NI_{\infty}}{3\kappa K} H_\ph + \frac{4\alpha NI_{\infty}}{\kappa K},
$$
which, by \eqref{H-splitting} and the choice $\kappa=8\alpha N I_{\infty}/(3K\delta)$, proves the lemma.
\end{proof}


\begin{lemma}\label{StepII}
Suppose $K,\Lambda,\alpha,U$ and $\kappa$ are positive,
and $0<\delta\leq 1/2$. Then 
\begin{equation}
  H_{N,\Lambda,K} \geq
  -\tfrac12(1- \CI)\sum_{\ell=1}^N \Delta_\ell -\left(\frac{\sqrt{2}\alpha}{1-\delta}-U\right)V_\C - \frac{2\alpha NK}{\sqrt{\pi}}.
\end{equation}
\end{lemma}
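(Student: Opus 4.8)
The plan is to eliminate the phonon field by completing the square in the creation and annihilation operators, as announced in the introduction, and then to recognize the resulting c-number interaction as an effective attractive Coulomb term plus a self-energy. Write the cutoff coupling function as $g(k):=\sqrt{\alpha}\,c_0^{-1}|k|^{-1}e^{-|k|^2/(4K^2)}\chi_{|k|\leq\Lambda}(k)$ and introduce the density operator $\hat\rho(k):=\sum_{\ell=1}^N e^{-\i k\cdot x_\ell}$, so that the electron-phonon term in $H_{N,\Lambda,K}$ reads $\int g(k)\big[\,\overline{\hat\rho(k)}\,a(k)+\hat\rho(k)\,a^*(k)\,\big]\,dk$. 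Since $\hat\rho(k)$ acts only on the electronic variables, it commutes with every $a(k')$ and $a^*(k')$ and behaves, at fixed electron positions, like a classical source. First I would complete the square against $(1-\delta)H_\ph$ using $b(k):=a(k)+g(k)\hat\rho(k)/(1-\delta)$, which gives the operator identity
\[
(1-\delta)H_\ph + \int g(k)\big[\,\overline{\hat\rho(k)}\,a(k)+\hat\rho(k)\,a^*(k)\,\big]\,dk = (1-\delta)\int b^*(k)b(k)\,dk - \frac{1}{1-\delta}\int g(k)^2|\hat\rho(k)|^2\,dk.
\]
Because $\int b^*(k)b(k)\,dk\geq 0$, this at once yields $H_{N,\Lambda,K}\geq -\tfrac12(1-\kappa)\sum_\ell\Delta_\ell + UV_\C - (1-\delta)^{-1}\int g(k)^2|\hat\rho(k)|^2\,dk$, reducing the lemma to an upper bound on the multiplication operator $\int g(k)^2|\hat\rho(k)|^2\,dk$.

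To estimate that term I would exploit that its integrand is pointwise non-negative, which lets me drop the sharp ultraviolet cutoff and replace $\int_{|k|\leq\Lambda}$ by $\int_{\RR^3}$ while only enlarging the expression. Expanding $|\hat\rho(k)|^2=\sum_{\ell,m}e^{\i k\cdot(x_m-x_\ell)}$, I split into the diagonal terms $\ell=m$ and the off-diagonal terms $\ell\neq m$. The weight $e^{-|k|^2/(2K^2)}/|k|^2$ is the Fourier transform of $W(x):=\tfrac{1}{4\pi}\int_{\RR^3}G_K(y)|x-y|^{-1}\,dy$, where $G_K$ is the normalized (non-negative) Gaussian whose Fourier transform is $e^{-|k|^2/(2K^2)}$; that is, $\int_{\RR^3}|k|^{-2}e^{-|k|^2/(2K^2)}e^{\i k\cdot x}\,dk=(2\pi)^3 W(x)$.

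The diagonal contributions sum to $N\int_{\RR^3}|k|^{-2}e^{-|k|^2/(2K^2)}\,dk$, an elementary Gaussian integral that, with the prefactor $\alpha/c_0^2$ and $c_0=2^{3/4}\pi$, evaluates to exactly $\alpha NK/\sqrt\pi$. For the off-diagonal terms the crucial point is that $W\geq 0$ and, by Newton's theorem applied to the radial probability density $G_K$, $W(x)\leq (4\pi|x|)^{-1}$; hence each pair contributes at most $2\pi^2/|x_\ell-x_m|$. Using $\sum_{\ell\neq m}|x_\ell-x_m|^{-1}=2V_\C$, I obtain the pointwise operator bound $\int g(k)^2|\hat\rho(k)|^2\,dk\leq \alpha NK/\sqrt\pi + (4\pi^2\alpha/c_0^2)\,V_\C$. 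The decisive arithmetic is $4\pi^2/c_0^2=\sqrt2$, so the mediated attraction is precisely $\sqrt2\,\alpha\,V_\C$, the feature driving the threshold at $\sqrt2\alpha=U$. Substituting this into the lower bound turns $UV_\C-(1-\delta)^{-1}\int g^2|\hat\rho|^2$ into $-\big(\sqrt2\alpha/(1-\delta)-U\big)V_\C$ plus the self-energy $-(1-\delta)^{-1}\alpha NK/\sqrt\pi$; finally $\delta\leq 1/2$ gives $(1-\delta)^{-1}\leq 2$, which bounds the self-energy by $2\alpha NK/\sqrt\pi$ and proves the claim.

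The main obstacle is the off-diagonal pair term, which a priori involves an oscillatory integral against the sharp cutoff $\chi_{|k|\leq\Lambda}$ that is not sign-definite term by term. The resolution is a two-step use of positivity: first the non-negativity of the full quadratic form $\int g^2|\hat\rho|^2$ permits discarding the sharp cutoff, and then the non-negativity of the inverse Fourier transform of the Gaussian weight—which is exactly why the Gaussian regularization was chosen in place of a sharp one—makes Newton's theorem applicable and delivers the clean comparison with the Coulomb potential. Everything else is bookkeeping of Gaussian integrals and the constant $c_0$.
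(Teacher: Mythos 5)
Your proposal is correct and follows essentially the same route as the paper: completing the square against $(1-\delta)H_\ph$ (the paper's Lemma~\ref{squares}), splitting the resulting $|\hat\rho(k)|^2$ sum into the diagonal self-energy $\alpha NK/\sqrt{\pi}$ and off-diagonal pair terms, and bounding the latter by $2\pi^2/|x_j-x_\ell|$ via Newton's theorem applied to the Gaussian-smeared Coulomb kernel, with $4\pi^2/c_0^2=\sqrt{2}$ yielding the effective attraction and $\delta\le 1/2$ giving the final factor $2$. The only difference is cosmetic: you make explicit the positivity argument for extending $\int_{|k|\le\Lambda}$ to $\int_{\RR^3}$, a step the paper performs silently.
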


\begin{proof}
By completing the square in annihilation and creation operators, that
is, by using Lemma~\ref{squares}, we see that
\begin{eqnarray}
\lefteqn{(1-\delta)H_\ph
+\frac{\sqrt{\alpha}}{c_0}\sum_{\ell=1}^N\int_{|k|\leq \Lambda}
\frac{e^{-\frac{|k|^2}{4K^2}}}{|k|}
\big[e^{\i k\cdot x_\ell}a(k) + e^{-\i k\cdot x_\ell}a^*(k)\big]\,dk. }\nonumber\\
  &\geq & -\frac{\alpha}{(1-\delta)c_0^2}\sum_{j,\ell=1}^N
  \int_{\RR^3}\frac{e^{-\frac{|k|^2}{2K^2}}}{|k|^2}  e^{\i
    k\cdot(x_j-x_\ell)}\, dk\nonumber\\
  &=& - \frac{2\alpha}{(1-\delta)c_0^2}\sum_{j<\ell}
  \int_{\RR^3}\frac{e^{-\frac{|k|^2}{2K^2}}}{|k|^2}  e^{\i
    k\cdot(x_j-x_\ell)}\, dk - \frac{\alpha NK}{(1-\delta)\sqrt{\pi}}.\label{stepIIa}
\end{eqnarray}
The integral in \eqref{stepIIa} represents the electrostatic
energy of two spherically symmetric, non-negative charge distributions centered at
$x_j$ and $x_\ell$, respectively, each distribution having total charge
one, see \eqref{Plancherel}. 
Hence Newton's theorem, \cite[Theorem 9.7]{LiebLoss2001}, implies that 
$$
     \int_{\RR^3}\frac{e^{-\frac{|k|^2}{2K^2}}}{|k|^2}  e^{\i
    k\cdot(x_j-x_\ell)}\, dk  \leq \frac{2\pi^2}{|x_j-x_\ell|}.
$$
Since $c_0^2 = 2\pi^2\sqrt{2}$, it follows that \eqref{stepIIa} is
bounded below by
$$
       -\frac{\sqrt{2}\alpha}{1-\delta}V_\C - \frac{\alpha NK}{(1-\delta)\sqrt{\pi}},
$$
which proves the lemma.
\end{proof}


\begin{proof}[\textbf{Proof of Theorem~\ref{main2}.}]
We shall combine the Lemmas~\ref{StepI} and \ref{StepII} with suitable
choices for $\delta$ and $K$. First, suppose that $0<\delta\leq 1/2$
and that $\kappa\in(0,1)$. Since $\sqrt{2}\alpha-U\geq 0$, 
by assumption of Theorem~\ref{main2}, 
the constant multiplying the potential $V_\C$ in Lemma~\ref{StepII} is
positive, and hence, after the scaling transformation
$$
    x\to \left(\frac{\sqrt{2}\alpha}{1-\delta}-U\right)^{-1}(1-\kappa)x
$$
we may apply \eqref{Gravitation} and find that 
\begin{eqnarray*}
    H_{N,\Lambda,K} &\geq &
    -\frac{\left(\frac{\sqrt{2}\alpha}{1-\delta}-U\right)^2}{1-\kappa}
    \inf\sigma\Big(\sum_{\ell=1}^N -\frac12\Delta_\ell-V_\C\Big) 
    - \frac{2\alpha NK}{\sqrt{\pi}}\\
     &\geq & -C_\G\frac{\left(\frac{\sqrt{2}\alpha}{1-\delta}-U\right)^2}{1-\kappa} N^{\frac73} - \frac{2\alpha NK}{\sqrt{\pi}},
\end{eqnarray*}
where $C_\G$ is chosen such that \eqref{Gravitation} holds true.

We now make the choices 
$$
 \delta = \tfrac12 N^{-\frac19} \ \ \mathrm{and} \ \ K = \tfrac13 32
   I_\infty\alpha N^{1+\frac29}, 
$$
which imply that $\kappa$, as defined in Lemma~\ref{StepI}, obeys 
$\kappa= \tfrac{1}{2} N^{-1/9}=\delta$.
Using that $(1-t)^{-1} \leq 1+2t$, for $0\leq t\leq 1/2$, that
$U\leq \sqrt{2}\alpha$,  and  $I_{\infty}/\sqrt{\pi}=(\sqrt{2}-1)/\pi\leq 1/(2\pi)$, we find that
\begin{eqnarray*}
     H_{N,\Lambda,K}& \geq & -C_\G\big[\sqrt{2}\alpha(1+2\delta)-U\big]^2 
        (1+2\kappa) N^{\frac73} - \frac{32}{3\pi}\alpha^2N^{2+\frac29}\\
     &\geq &   -C_\G\big[(\sqrt{2}\alpha-U)^2+16\alpha^2\delta\big]
     (1+2\kappa) N^{\frac73}- \frac{32}{3\pi}\alpha^2 N^{\frac73-\frac19}\\
      &\geq &
      -C_\G\big[(\sqrt{2}\alpha-U)^2N^{\frac73}+18\alpha^2N^{\frac73-\frac19}\big] 
      - \frac{32}{3\pi}\alpha^2 N^{\frac73-\frac19} \\
     &=& -C_{\G}(\sqrt{2}\alpha-U)^2N^{\frac73} -
     \Big(18C_\G+\frac{32}{3\pi}\Big)\alpha^2 N^{\frac73-\frac19}.
\end{eqnarray*}
\end{proof}

\begin{proof}[\textbf{Proof of Theorem~\ref{main3}.}] Finally we
  consider the case, where $U-\sqrt{2}\alpha>0$. In
  Lemma~\ref{StepII} we choose $\delta = (U-\sqrt{2}\alpha)/(2U)$ and
$K = 8\alpha NI_\infty/(3\delta)$, so that $\kappa=1$ in
Lemma~\ref{StepI}, and 
$$
 U - \frac{\sqrt{2}\alpha}{1-\delta} = \frac{U-\sqrt{2}\alpha}{2(1-\delta)}>0.
$$
From Lemma~\ref{StepI} and  Lemma~\ref{StepII} it hence follows that
\begin{equation*}
   E_N \geq -\frac{2\alpha NK}{\sqrt{\pi}} -\frac{3}{2\delta}
   = -\frac{16I_{\infty}}{3\sqrt{\pi}}\alpha^2 N^2\frac{2U}{U-\sqrt{2}\alpha} -\frac{3U}{U-\sqrt{2}\alpha},
\end{equation*}
where $I_{\infty}/\sqrt{\pi}=(\sqrt{2}-1)/\pi\leq 1/(2\pi)$.
\end{proof}

\begin{appendix}
\secct{Auxiliary Results} \label{app-Pekar}

\begin{lemma}\label{lm:rho}
Suppose that $\rho\in L^1(\RR^3)\cap L^{5/3}(\RR^3)$, $\rho\geq 0$,
$0<\mu<6/5$, and let $\rho_{N}(x)=N^{2}\rho(N^{1/3}x)$. Then
\begin{eqnarray}
 \int_{\RR^6} \frac{\rho(x)\rho(y)}{|x-y|^{\mu}}\, dxdy &\leq & 
   a_{\mu}\|\rho\|_1^{2-\frac{5\mu}{6}}\|\rho\|_{\frac53}^{\frac{5\mu}{6}}\label{a1}\\
 \int_{\RR^6} \frac{\rho(x)\rho(y)}{|x-y|^{\mu}}\, dxdy 
&=& (2\pi)^{-\mu}\frac{c_{3-\mu}}{c_{\mu}}\int_{\RR^3}\frac{|\hat{f}(k)|^2}{|k|^{3-\mu}}dk\label{a2}\\
  \int_{\RR^6} \frac{\rho(x)\rho(y)}{|x-y|}\, dxdy 
&=& \frac{1}{2\pi^2}\int_{\RR^3}\frac{|\hat{f}(k)|^2}{|k|^2}dk\label{Plancherel}\\
  \int_{\RR^6} \frac{\rho_N(x)\rho_N(y)}{|x-y|^{\mu}}\, dxdy 
&=& N^{2+\frac{\mu}{3}}\int_{\RR^6} \frac{\rho(x)\rho(y)}{|x-y|^{\mu}}\, dxdy \label{a4}
\end{eqnarray}
where
$$
  a_\mu :=\left(\frac{4\pi}{3}\right)^{\frac{\mu}{3}}\left(\frac{6}{5\mu}\right)^{1+\frac{\mu}{3}}
 \left(\frac{6}{5\mu}-1\right)^{-1+\frac{\mu}{2}},\qquad c_{\mu}:=\pi^{-\frac{\mu}{2}}\Gamma(\frac{\mu}{2})
$$
in \eqref{a1} and \eqref{a2}, respectively.
\end{lemma}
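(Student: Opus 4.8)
The plan is to dispatch the four assertions in increasing order of difficulty, leaving the sharp bound \eqref{a1} for last; throughout I read the $\hat f$ appearing in \eqref{a2} and \eqref{Plancherel} as $\hat\rho$. Identity \eqref{a4} is the simplest: substituting $u=N^{1/3}x$ and $v=N^{1/3}y$ gives $dx\,dy=N^{-2}\,du\,dv$, $|x-y|^{-\mu}=N^{\mu/3}|u-v|^{-\mu}$, and $\rho_N(x)\rho_N(y)=N^4\rho(u)\rho(v)$, so the prefactors combine to $N^{4+\mu/3-2}=N^{2+\frac{\mu}{3}}$, as claimed.

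For \eqref{a2} I would pass to Fourier space. Writing the Riesz energy as $\int_{\RR^3}(\rho*|\cdot|^{-\mu})(x)\,\rho(x)\,dx$ and invoking Plancherel in the normalization of the paper—under which $\widehat{f*g}=\hat f\,\hat g$ and $\int f\bar g\,dx=(2\pi)^{-3}\int\hat f\,\overline{\hat g}\,dk$—reduces the claim to the distributional Fourier transform of the Riesz kernel. The standard identity $\widehat{|\cdot|^{-\mu}}(k)=G_\mu\,|k|^{-(3-\mu)}$, whose constant in the present convention works out to $G_\mu=(2\pi)^{3-\mu}c_{3-\mu}/c_\mu$, then yields \eqref{a2} directly; and \eqref{Plancherel} is merely the case $\mu=1$, where $c_2/c_1=\pi^{-1}$ and the prefactor $(2\pi)^{-1}\pi^{-1}$ collapses to $1/(2\pi^2)$. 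The only real care needed here is bookkeeping of the $2\pi$ factors, since the paper's Fourier transform carries no $(2\pi)^{-3/2}$. (The manipulations are justified because the energy is finite by \eqref{a1}.)

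The substantive estimate is \eqref{a1}, which I would prove by a split-and-optimize argument rather than by a black-box Hardy--Littlewood--Sobolev inequality, precisely because the explicit constant $a_\mu$ is wanted. Fix $R>0$ and, for each $x$, split $\int\rho(y)|x-y|^{-\mu}\,dy$ into the regions $|x-y|>R$ and $|x-y|\le R$. On the far region I bound $|x-y|^{-\mu}\le R^{-\mu}$ and pick up $R^{-\mu}\|\rho\|_1$. On the near region I apply H\"older with the conjugate exponents $5/3$ and $5/2$; the kernel then contributes $\big(\int_{|z|\le R}|z|^{-5\mu/2}\,dz\big)^{2/5}=\big(4\pi R^{3-5\mu/2}/(3-\tfrac{5\mu}{2})\big)^{2/5}$, where the radial integral converges exactly because $\mu<6/5$ and the exponent of $R$ simplifies to $\tfrac65-\mu$. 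Integrating against $\rho(x)\,dx$ produces a bound of the form $A R^{-\mu}+BR^{6/5-\mu}$ with $A=\|\rho\|_1^2$ and $B\propto\|\rho\|_1\|\rho\|_{5/3}$, and minimizing this over $R>0$ is elementary calculus.

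I expect the main obstacle to be purely computational: verifying that the one-variable optimization in \eqref{a1} collapses to precisely the stated $a_\mu$. This is where regrouping the constant using the identities $\tfrac65-\mu=\mu(\tfrac{6}{5\mu}-1)$ and $3-\tfrac{5\mu}{2}=\tfrac{5\mu}{2}(\tfrac{6}{5\mu}-1)$ should reassemble the three factors of $a_\mu$ exactly. No genuine analytic difficulty arises, since the convergence conditions throughout are guaranteed by $0<\mu<6/5$ together with $\rho\in L^1(\RR^3)\cap L^{5/3}(\RR^3)$.
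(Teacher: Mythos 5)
Your proof is correct and takes essentially the same route as the paper: for \eqref{a1} the same split of $\int\rho(y)|x-y|^{-\mu}dy$ at radius $R$ with H\"older exponents $5/3$, $5/2$ followed by optimization in $R$ (and your one-variable minimization does reproduce $a_\mu$ exactly, via the regrouping $t=\tfrac{6}{5\mu}$), while \eqref{a2} is in both cases the Riesz-kernel Fourier identity of Lieb--Loss, Corollary 5.10, with the $(2\pi)^{-\mu}$ conversion for the paper's unnormalized transform, \eqref{Plancherel} being the case $\mu=1$ and \eqref{a4} the same change of variables. You also correctly handle the paper's typo by reading $\hat f$ as $\hat\rho$ in \eqref{a2} and \eqref{Plancherel}.
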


Inequality \eqref{a1}, in the special case $\mu=1$, implies that 
$\cE_\PTF(\rho)$ is bounded below, and moreover, that
$\|\rho\|_{5/3}$ is bounded unformly for densities $\rho$ with  $\|\rho\|_1=1$ 
and $\cE_\PTF(\rho)\leq E_\PTF+1$.

\begin{proof}[Proof of Lemma~\ref{lm:rho}]
For each $R>0$, by H\"older's inequality,
\begin{align*}
\int_{\RR^3}\frac{\rho(y)}{|x-y|^{\mu}}dy &=\int_{|x-y|\leq R}\frac{\rho(y)}{|x-y|^{\mu}}dy +
\int_{|x-y|\geq R}\frac{\rho(y)}{|x-y|^{\mu}}dy\\
 &\leq \left(\frac{8\pi}{6-5\mu}\right)^{\frac25}R^{\frac65-\mu}\|\rho\|_{\frac53} +R^{-\mu}\|\rho\|_1.
\end{align*}
By optimizing this bound w.r.to $R>0$ we obtain \eqref{a1}.
Equation~\eqref{a2} follows from
\cite[Corollary~5.10]{LiebLoss2001}. The factor $(2\pi)^{-\mu}$ stems from
the differences in the definition of the
Fouriertransform. Equation~\eqref{Plancherel} is the important special case
$\mu=1$ from \eqref{a2}, and \eqref{a4} is straightforward to
verify by a change of variables. 
\end{proof}

\begin{lemma}\label{squares}
Suppose $f\in L^2(\RR^3)$. Then, for every $\delta>0$,
$$
    \int_{\RR^3}\big[\delta a^{*}(k)a(k)+\overline{f(k)}a(k) +
    f(k)a^{*}(k)\big]\,dk\ \geq\ -\frac{1}{\delta}\|f\|^2
$$
and the lower bound is attained by the expectation value in the
coherent state $\eta\in\cF$, $\|\eta\|=1$, defined by $a(k)\eta=-\delta^{-1}f(k)\eta$.
\end{lemma}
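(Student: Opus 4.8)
The plan is to complete the square by passing to a shifted annihilation operator. I would define $b(k) := a(k) + \delta^{-1}f(k)$, a $c$-number translate of $a(k)$, with formal adjoint $b^*(k) = a^*(k) + \delta^{-1}\overline{f(k)}$. Because the shift is by a multiple of the identity, the canonical commutation relations are unchanged, $[b(k),b^*(k')] = [a(k),a^*(k')]$, and in particular $\int_{\RR^3}\delta\, b^*(k)b(k)\,dk$ is again $\delta$ times a number operator for the displaced modes, hence nonnegative.

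The key step is the algebraic identity
$$
\int_{\RR^3}\delta\, b^*(k)b(k)\,dk = \int_{\RR^3}\big[\delta a^*(k)a(k) + \overline{f(k)}a(k) + f(k)a^*(k)\big]\,dk + \frac{1}{\delta}\|f\|^2,
$$
obtained by expanding the product $\delta\big(a^*(k)+\delta^{-1}\overline{f(k)}\big)\big(a(k)+\delta^{-1}f(k)\big)$: the cross terms reproduce $\overline{f(k)}a(k)+f(k)a^*(k)$, and the remaining $c$-number term integrates to $\delta^{-1}\int_{\RR^3}|f(k)|^2\,dk = \delta^{-1}\|f\|^2$. Rearranging this identity and using $\int_{\RR^3}\delta\, b^*(k)b(k)\,dk \geq 0$ immediately yields the claimed lower bound $-\delta^{-1}\|f\|^2$.

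For sharpness I would introduce the coherent state $\eta$ characterized by $a(k)\eta = -\delta^{-1}f(k)\eta$, which exists and is normalized precisely because $f\in L^2(\RR^3)$ (it is the displacement of the vacuum $\Omega$ by $-\delta^{-1}f$). By construction $b(k)\eta = 0$ for all $k$, so $\la\eta, \int_{\RR^3}\delta\, b^*(k)b(k)\,dk\,\eta\ra = \int_{\RR^3}\delta\|b(k)\eta\|^2\,dk = 0$, and the identity above shows that the expectation value of the original operator in $\eta$ equals exactly $-\delta^{-1}\|f\|^2$, attaining the bound.

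The only genuine issue is that these are unbounded operator expressions, so the cleanest route is to read the statement as an inequality between quadratic forms. I would fix a core such as the finite-particle vectors with smooth, compactly supported components, verify the identity and the nonnegativity $\la\psi, \int_{\RR^3}\delta\, b^*(k)b(k)\,dk\,\psi\ra = \int_{\RR^3}\delta\|b(k)\psi\|^2\,dk \geq 0$ there, and then check that $\eta$ lies in the form domain so the attaining expectation value is finite. This form-domain bookkeeping is the main, though routine, obstacle; the completion of the square itself is immediate.
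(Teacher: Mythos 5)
Your proposal is correct and follows essentially the same route as the paper: the paper's proof is exactly the completion of the square $\delta a^*(k)a(k)+\overline{f(k)}a(k)+f(k)a^*(k)=\delta\bigl(a^*(k)+\delta^{-1}\overline{f(k)}\bigr)\bigl(a(k)+\delta^{-1}f(k)\bigr)-\delta^{-1}|f(k)|^2$ followed by positivity of the normally ordered square, which is your shifted-operator identity in different notation. Your verification that the coherent state with $b(k)\eta=0$ attains the bound is a welcome detail the paper asserts in the lemma but does not spell out, and your remark about reading everything in the sense of quadratic forms is the right way to handle the unboundedness.
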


\begin{proof}
By completeing the square in creation and annihilation operators
\begin{eqnarray*}
   \lefteqn{\int_{\RR^3}\big[\delta a^{*}(k)a(k)+\overline{f(k)}a(k) +
    f(k)a^{*}(k)\big]\,dk} \\
    &=& \int_{\RR^3}\Big[
    \delta\Big(a^{*}(k)+\frac{\overline{f(k)}}{\delta}\Big)
    \Big(a(k)+\frac{f(k)}{\delta}\Big) -\frac{|f(k)|^2}{\delta}\Big]\,dk\\
    &\geq & -\frac{1}{\delta}\|f\|^2
\end{eqnarray*}
\end{proof}

\begin{proposition}\label{mainA}
Suppose that $\sqrt{2}\alpha\leq U$. Then for all $N,\Lambda>0$, and
all $f\in \wedge^N L^2(\RR^3),\eta\in \cF$ with $\|f\|=\|\eta\|=1$,
$$
     \langle f\otimes\eta, H_{N,\Lambda}f\otimes\eta\rangle \geq
     -c_\romL \frac{40}{3}\left(\frac{2}{3\pi}\right)^{\frac23}N,
$$
where $c_\romL=1.68$ or any other
constant for which the Lieb-Oxford inequality holds.
\end{proposition}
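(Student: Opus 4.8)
The plan is to reduce the product‑state expectation to a functional of the one‑electron density matrix $\gamma$ of $f$ (with $\rho(x)=\gamma(x,x)$) and then play the Lieb–Oxford inequality off against the Lieb–Thirring inequality. Exactly as in the derivation of \eqref{eta-ham} in the proof of Proposition~\ref{EN-bound}, and using that \eqref{int1} equals \eqref{int2}, the expectation decomposes as
\begin{equation*}
\langle f\otimes\eta,\,H_{N,\Lambda}\,f\otimes\eta\rangle
=\tfrac12\Tr[-\Delta\gamma]+U\langle f,V_\C f\rangle
+\langle\eta,H_\ph\eta\rangle+\sqrt{\alpha}\,\langle\eta,\Phi_\Lambda(\rho)\eta\rangle .
\end{equation*}
Since only a lower bound is wanted I would \emph{not} minimize over $\eta$, but apply Lemma~\ref{squares} with $\delta=1$ and $f(k)=\sqrt{\alpha}\,\hat\rho(k)\chi_{|k|\le\Lambda}/(c_0|k|)$. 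This gives, for every $\eta$ and uniformly in $\Lambda$,
\begin{equation*}
\langle\eta,H_\ph\eta\rangle+\sqrt{\alpha}\,\langle\eta,\Phi_\Lambda(\rho)\eta\rangle
\ \ge\ -\frac{\alpha}{c_0^2}\int_{|k|\le\Lambda}\frac{|\hat\rho(k)|^2}{|k|^2}\,dk
\ \ge\ -\frac{\sqrt{2}\,\alpha}{2}\int_{\RR^6}\frac{\rho(x)\rho(y)}{|x-y|}\,dx\,dy,
\end{equation*}
where the last step uses \eqref{Plancherel} and $c_0^2=2\sqrt{2}\,\pi^2$.

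Next I would insert the Lieb–Oxford inequality for the repulsion and exploit the hypothesis $\sqrt{2}\alpha\le U$. Writing $D:=\int_{\RR^6}\rho(x)\rho(y)/|x-y|\,dx\,dy$, the decisive algebraic point is the splitting $U\langle f,V_\C f\rangle-\tfrac{\sqrt2\alpha}{2}D=(U-\sqrt2\alpha)\langle f,V_\C f\rangle+\sqrt2\alpha\bigl(\langle f,V_\C f\rangle-\tfrac12 D\bigr)$: the first summand is nonnegative and may be discarded, while the second is bounded below by $-\sqrt2\alpha\,c_\romL\int\rho^{4/3}\,dx$ via Lieb–Oxford. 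This is precisely the cancellation of the mean‑field (direct) attraction against the Coulomb repulsion that removes the dangerous $N^{7/3}$ ``gravitational collapse'' scale and leaves only the exchange‑type remainder, so that
\begin{equation*}
\langle f\otimes\eta,\,H_{N,\Lambda}\,f\otimes\eta\rangle\ \ge\ \tfrac12\Tr[-\Delta\gamma]-c\int_{\RR^3}\rho(x)^{4/3}\,dx,
\end{equation*}
with $c$ proportional to $c_\romL$.

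Finally I would trade the kinetic energy against this surviving term. The Lieb–Thirring inequality bounds $\Tr[-\Delta\gamma]$ below by a constant times $\int\rho^{5/3}\,dx$, while the elementary pointwise estimate $t^{4/3}\le\varepsilon\,t^{5/3}+\tfrac{1}{4\varepsilon}t$, integrated against $\rho$ and using $\int\rho\,dx=N$, yields $\int\rho^{4/3}\,dx\le\varepsilon\int\rho^{5/3}\,dx+\tfrac{N}{4\varepsilon}$. Choosing $\varepsilon$ so that the $\int\rho^{5/3}$ contribution is absorbed into the kinetic term leaves a bound of the form $-(\mathrm{const})\,N$, and inserting the sharp Thomas–Fermi value of the kinetic constant together with $c_\romL$ is what is meant to produce the explicit coefficient $\tfrac{40}{3}(\tfrac{2}{3\pi})^{2/3}$. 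The hardest part will be exactly this last balancing step: one must select the interpolation to keep the linear‑in‑$N$ constant clean and verify that the mean‑field cancellation confines the coupling dependence to the discarded nonnegative term, so that the final extensive constant is the asserted one; controlling how the Lieb–Oxford and Lieb–Thirring constants combine (rather than the cheap interpolation one first writes down) is the crux.
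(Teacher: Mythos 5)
Your strategy is, up to one twist, the paper's own proof: the same product-state decomposition \eqref{eta-ham}, the same completion of the square via Lemma~\ref{squares} with $\delta=1$ combined with \eqref{Plancherel} and $c_0^2=2\sqrt{2}\pi^2$ to produce the term $-\tfrac{\sqrt{2}\alpha}{2}\int\rho(x)\rho(y)|x-y|^{-1}dxdy$, then Lieb--Oxford against the Coulomb repulsion, Lieb--Thirring for the kinetic energy, and the interpolation $\int\rho^{4/3}\leq\eps\int\rho^{5/3}+\tfrac{N}{4\eps}$ (the paper's \eqref{four-five}). Your one genuine deviation is the splitting: you keep $(U-\sqrt{2}\alpha)\langle f,V_\C f\rangle\geq 0$ and apply Lieb--Oxford only to the portion $\sqrt{2}\alpha\bigl(\langle f,V_\C f\rangle-\tfrac12 D(\rho)\bigr)$, whereas the paper applies Lieb--Oxford to all of $U\langle f,V_\C f\rangle$ and keeps $(U-\sqrt{2}\alpha)\tfrac12 D(\rho)\geq 0$, cf.\ \eqref{stable-TF}. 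Your variant is in fact a mild sharpening, since your exchange remainder carries the coefficient $\sqrt{2}\alpha\leq U$.

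The gap is exactly at the step you defer: the hope that ``the mean-field cancellation confines the coupling dependence to the discarded nonnegative term'' is false, and no balancing of $\eps$ can realize it. Your own intermediate bound is $\tfrac12\Tr[-\Delta\gamma]-\sqrt{2}\alpha\,c_\romL\int\rho^{4/3}dx$; absorbing via $\eps=c_{\LT}/(\sqrt{2}\alpha c_\romL)$ leaves $-\tfrac{(\sqrt{2}\alpha c_\romL)^2}{4c_{\LT}}N=-\tfrac{5}{3}c_\romL^2\alpha^2\bigl(\tfrac{2}{3\pi}\bigr)^{2/3}N$, i.e.\ a constant proportional to $\alpha^2$, not the coupling-free $\tfrac{40}{3}c_\romL\bigl(\tfrac{2}{3\pi}\bigr)^{2/3}$: the Lieb--Oxford remainder enters linearly with coefficient $\sqrt{2}\alpha$, and whatever is absorbed into the kinetic term must match that coefficient, so the linear-in-$N$ constant necessarily inherits it. Moreover no route can avoid this: for $N=1$ the hypothesis $\sqrt{2}\alpha\leq U$ is vacuous ($V_\C\equiv 0$), and by Lemma~\ref{squares} the product-state infimum converges (as $\Lambda\to\infty$) to the Pekar energy $\sim -C\alpha^2$, which drops below any fixed constant for large $\alpha$; hence a coupling-independent extensive lower bound for product states cannot hold. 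You should know that the paper's printed proof contains the very same slip: in \eqref{stable-TF} the factor $U$ has silently disappeared from the exchange term (Lieb--Oxford applied to $UV_\C$ yields $-Uc_\romL\int\rho^{4/3}$), while the subsequent choice $\eps=2c_{\LT}/(Uc_\romL)$ presupposes it, so the honest output of the paper's argument is an extensive bound with constant of order $c_\romL^2U^2$ (of order $c_\romL^2\alpha^2$ with your sharper split). In short: your proposal correctly reproduces the method and proves the substantive point --- extensivity of product-state energies when $\sqrt{2}\alpha\leq U$ --- but the specific coupling-free constant asserted in the statement is not reachable by this (or any) argument, and your final paragraph asserts rather than closes that step.
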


\begin{proof}
As in the proof of Theorem~\ref{main1}
\begin{eqnarray*}
   \lefteqn{\langle f\otimes\eta, H_{N,\Lambda}f\otimes\eta\rangle} \\
   &=& \langle f, (-\tfrac{1}{2}\Delta +UV_\C)f\rangle
   +\langle\eta,(H_\ph+\sqrt{\alpha}\phi_{\Lambda}(\rho))\eta\rangle\\
   &\geq & \langle f, (-\tfrac{1}{2}\Delta +UV_\C)f\rangle -\sqrt{2}\alpha\tfrac{1}{2}\int_{\RR^6}\frac{\rho(x)\rho(y)}{|x-y|}dxdy.
\end{eqnarray*}
Using the Lieb-Thirring \cite[Theorem~2.15]{LiebLoss2001} and the Lieb-Oxford inequalities \cite{Lieb1981b} we find that
\begin{eqnarray}\nonumber
   \lefteqn{\langle f\otimes\eta, H_{N,\Lambda}(f\otimes\eta)\rangle} \\
   &\geq & c_{\LT}\int_{\RR^3}\rho(x)^{\frac53}dx +
   (U-\sqrt{2}\alpha)\tfrac{1}{2}\int_{\RR^6}\frac{\rho(x)\rho(y)}{|x-y|}dxdy-c_\romL\int_{\RR^3}\rho(x)^{\frac43}dx\label{stable-TF}
\end{eqnarray}
where $c_{\LT}=\frac{3}{10}(\frac{3\pi}{2})^{2/3}$ and $c_\romL=1.68$ or any other constant for which the Lieb-Oxford
inequality is satisfied. From
$\rho(x)^{4/3}=\rho(x)^{5/6}\rho(x)^{1/2}$ and the Cauchy-Schwarz inequality, for every
$\eps>0$,
\begin{align}
    \int_{\RR^3}\rho(x)^{\frac43}dx &\leq \left(
      \int_{\RR^3}\rho(x)^{\frac53}dx\right)^{\frac12}\left(
      \int_{\RR^3}\rho(x)dx\right)^{\frac12}\nonumber\\
    &\leq \tfrac{1}{2}\left(\eps \int_{\RR^3}\rho(x)^{\frac53}dx
      +\frac{1}{\eps}\int_{\RR^3}\rho(x)dx \right). \label{four-five}
\end{align} 
The estimates \eqref{stable-TF} and \eqref{four-five} with
$\eps=2c_{\LT}/Uc_\romL$ prove the proposition.
\end{proof}

\end{appendix}


\providecommand{\bysame}{\leavevmode\hbox to3em{\hrulefill}\thinspace}
\providecommand{\MR}{\relax\ifhmode\unskip\space\fi MR }
\providecommand{\MRhref}[2]{%
  \href{http://www.ams.org/mathscinet-getitem?mr=#1}{#2}
}
\providecommand{\href}[2]{#2}

\end{document}